\newcommand{\1}{\underline{1}}
 \newtheorem{theo}{Theorem}
 \newtheorem{defin}{Definition}
 \newtheorem{lem}{Lemma}
\newtheorem{prop}{Proposition}
\newtheorem{corol}{Corollary}
    \newtheorem{assump}{Assumption}
\title{Global attraction of ODE-based mean field models with hyperexponential job sizes}
\author[B. Van Houdt]{Benny Van Houdt\\
Dept. Mathematics and Computer Science\\
University of Antwerp, Belgium}
\date{}
\begin{document}

\begin{abstract}
Mean field modeling is a popular approach to assess the performance of large scale
computer systems. The evolution of many mean field models is characterized by a set
of ordinary differential equations that have a unique fixed point. In order to
prove that this unique fixed point corresponds to the limit of the stationary measures
of the finite systems, the unique fixed point must be a global attractor. While global
attraction was established for various systems in case of exponential job sizes, it is often
unclear whether these proof techniques can be generalized to non-exponential job sizes.

In this paper we show how simple monotonicity arguments can be used to prove global
attraction for a broad class of ordinary differential equations that capture the evolution
of mean field models with hyperexponential job sizes. 
This class includes both existing as well as previously unstudied 
load balancing schemes
and can be used for systems with either finite or infinite buffers. 

The main novelty of the approach exists in using a Coxian representation for the  hyperexponential job sizes 
and a partial order that is stronger than the componentwise partial order used in 
the exponential case.   
\end{abstract}
\maketitle

\section{Introduction}
Mean field models are a popular technique to assess the performance of large scale (computer) systems.
They have been applied in various areas such as load balancing \cite{arpanStSyst,bramsonLB_QUESTA,martin1999,mitzenmacher2,vvedenskaya3,ying_batches}, 
work stealing \cite{gast2010mean,minnebo2}, caching \cite{gast2015transient}, garbage collection \cite{vanhoudt31,vanhoudt32}, CSMA networks \cite{cecchi2015mean}, bin packing \cite{xie1}, file swarming systems \cite{lin_p2p}, coupon collector problems \cite{massoulie1}, etc.  
In many cases the evolution of the mean field model
is described by a simple set of ordinary differential equations (ODEs) and one can show that this set of
ODEs has a unique fixed point (that may even have a closed form). The main idea behind a mean field approximation is
that the stationary distribution of a single component in the network  should (weakly) converge to (the Dirac measure of)
the fixed point of the ODEs as the number of components $N$ tends to infinity. Therefore the 
fixed point approximates the stationary behavior of any component in a large finite system.

Different approaches exist to prove the convergence of the stationary distributions to the fixed point of the
mean field limit as $N$ tends to infinity. The traditional {\it indirect} method
exists in first proving convergence of the stochastic processes over finite time scales, that is, for any fixed $T$,
one shows that the sample paths of the stochastic processes on $[0,T]$ converge towards the solution of the ODEs on
$[0,T]$ (with the appropriate
initial condition). For this step, one can often rely on Kurtz's theorem \cite{ethier1,mitzenmacher2} or 
the convergence of transition semigroups of Markov processes \cite{arpanStSyst,vvedenskaya3}. 
The second step exists in showing that the stochastic systems with finite $N$ each have a stationary measure
and that this sequence of stationary measures has a limit point (which follows from the 
tightness of the stationary measures). The final step then exists in showing that the fixed point is a global attractor and that
the limit point of the stationary measures must be the Dirac measure associated with the fixed point.  
It is fair to state that, given existing mean field theory, proving global attraction of the fixed point is 
often the most demanding step (especially if the state space is a subset of $\mathbb{R}^n$, see Corollary  \ref{cor:Kurtz} in Section \ref{sec:global}). 

A recent {\it direct} method to prove convergence is to rely on Theorem 1 of \cite{ying2016rate} or Theorem 3.2 of \cite{Gast_sigm2017} that
were both obtained using Stein's method \cite{braverman2017stein2}.
This approach does not require proving convergence over finite time scales $[0,T]$. Instead it makes use of
the solution of the so-called Poisson equation. The solution of this equation is expressed as 
an integral that is only properly defined if the fixed point is a global attractor (that is locally exponentially stable).
Thus, Stein's method, when applied to ODE-based mean field models, also requires global attraction of the
fixed point. In fact 
the main challenge in verifying the conditions needed to apply  Theorem 1 of \cite{ying2016rate} or Theorem 3.2 of \cite{Gast_sigm2017} exists in showing that the 
fixed point is a global attractor. 

One approach to prove global attraction of a set of ODEs towards its fixed point relies on defining a Lyapunov function as in \cite{mitzenmacher2}.
However in general coming up with a suitable Lyapunov function, even in case of exponential job sizes, is highly
challenging. A somewhat more flexible approach, that was applied in \cite{arpanStSyst,martin1999,vvedenskaya3} for systems with exponential job sizes,
relies on monotonicity. It is composed of the following three steps. First, one defines the state space $\Omega$ in such a way that the set of ODEs maintains the componentwise partial order $\leq$ over time. In other words, if $h \leq \tilde h$ in the componentwise ordering, then $h(t) \leq \tilde h(t)$ where $h(t)$ and $\tilde h(t)$  are the unique solutions to the set of ODEs with $h(0)=h$ and $\tilde h(0)= \tilde h$. Next, one shows that for any fixed point $\pi$ and $h \in \Omega$
there exists an $h^{(l)},h^{(u)} \in \Omega$ such that $h^{(l)} \leq h, \pi \leq h^{(u)}$ in the componentwise ordering. Finally, global attraction on $\Omega$
follows by proving attraction for any initial point $h \in \Omega$ for which either $h \leq \pi$ or $\pi \leq h$ in the componentwise ordering.

Although it is easy to generalize ODE-based mean field models with exponential job sizes to hyperexponential job sizes
(or even phase-type distributed job sizes), generalizing this monotonicity approach to establish global attraction appears problematic. 
In this paper we nevertheless demonstrate that for a broad class of
ODE-based mean field models with hyperexponential jobs sizes, one can still rely on such a monotonicity argument. 
In order to do so, we introduce two novel key ideas.  First, we set up the ODE-based mean field model using a Coxian representation of the 
hyperexponential distribution. By using this Coxian representation all jobs necessarily start service in phase one, the service
phase can only increase by one at a time and the service completion rate decreases as the phase increases
(see Section \ref{sec:Coxian}). These three features
are essential to find a partial ordering on $\Omega$ that is preserved by the set of ODEs over time.
Second, we rely on a partial ordering that is stronger than the componentwise ordering used in the
exponential case as the set of ODEs does {\it not} preserve the usual componentwise order over time (see Section \ref{sec:order}). 

Hyperexponential distributions are often used to model highly variable workloads \cite{dror_book}. \
Efficient algorithms to fit a hyperexponential distribution to heavy tailed distributions can be found in \cite{feldman98,haverkort_fitting,riska_fitting,starobinski_fitting}.
The class of hyperexponential distributions is also dense in the set of all distributions with a completely monotone probability density function (pdf) \cite[Theorem 3.2]{feldman98},
such as the Pareto and Weibull distribution. 
A pdf $f$ is completely monotone if all its derivatives exist and $(-1)^n f^{(n)}(t) \geq 0$ for all $t>0$ and $n \geq 1$.

Although various mean field models with non-exponential job sizes have been introduced, e.g., \cite{vasantam1}, most of these
papers only focus on the convergence over finite time scales and the uniqueness of the fixed point. One notable exception is
\cite{bramsonLB_QUESTA} which establishes the convergence of the stationary regime for the classic power-of-d load balancing
scheme under FIFO service and any job size distribution with decreasing hazard rate. Their proof is highly technical,
while the approach taken in this paper is much more elementary.  

Instead of focusing on a single mean field model, we identify a set of sufficient conditions such that our result applies
to any mean field model satisfying these conditions. We demonstrate that these conditions are satisfied by various mean
field models, such
as the classic power-of-d load balancing \cite{mitzenmacher2,vvedenskaya3}, the pull/push strategies studied in \cite{minnebo2}
and the power-of-d choices load balancing with batch sampling. Further, we introduce a class of probability distributions
$\mathcal{C}_0$,  show that the set of hyperexponential distributions is a strict subclass of $\mathcal{C}_0$
and establish global attraction under these sufficient conditions for any job size distribution belonging to the class $\mathcal{C}_0$.
In other words, the main result also holds for some job size distributions that are not hyperexponential distributions. We also theoretically characterize the first three moments that can be matched 
with a distribution belonging to $\mathcal{C}_0$.

The paper is structured as follows. In Section \ref{sec:Coxian} we derive a Coxian representation of a hyperexponential
distribution, define the class of distributions $\mathcal{C}_0$, prove that all hyperexponential distributions
belong to $\mathcal{C}_0$ and characterize the first three moments.   In Section \ref{sec:model} we introduce the general form of the set of ODEs
characterizing the mean field model. Examples are presented in Section \ref{sec:examples}. The state space and partial order that
enable us to use monotonicity arguments are outlined in Section \ref{sec:order}.
 The set of sufficient conditions
and the global attraction theorem  are discussed in Section \ref{sec:global}, where
we also show that convergence of the stationary measures then follows
from existing results for systems with finite buffers. These conditions are verified in
Section \ref{sec:examples2} for the examples presented in Section \ref{sec:examples}. The proof of the global attraction theorem  
is detailed in Section \ref{sec:proof}. Conclusions are drawn in Section \ref{sec:concl}.

\section{Coxian representations}\label{sec:Coxian}
A cumulative distribution function (cdf) $F$ is a hyperexponential distribution if there
exists a set of probabilities $\tilde p_1,\ldots,\tilde p_n$ such that $\sum_{k=1}^n \tilde p_k = 1$
and real numbers $\mu_1,\ldots,\mu_n > 0$ such that $F(t) = 1 - \sum_{k=1}^n \tilde p_k e^{-\mu_k t}$.  
Further, a cdf $F$ is a phase-type distribution if there exists
a non-negative vector $\alpha=(\alpha_1,\ldots,\alpha_n)$ with $\sum_{i=1}^n \alpha_i = 1$
and a matrix $S$ with negative diagonal entries, non-negative off-diagonal entries
and non-positive row sums such that $F(t) = 1 - \alpha e^{St} \1$, where $\1$ is a column vector of ones. 
In which case $(\alpha,S)$ is called a phase-type representation of $F$. It is well known
that the representation $(\alpha,S)$ of a phase-type distribution is not unique \cite{ocinneide1}.

The most natural phase-type representation of a hyperexponential distributions is clearly given by
setting $\alpha = (\tilde p_1,\ldots,\tilde p_n)$ and 
\[S = \begin{bmatrix}
-\mu_1 &  & &  \\
 & -\mu_2 & &\\
 &  & \ddots &  & \\
 &  &  & -\mu_n 
\end{bmatrix}.\]
Thus, it is very natural to use this phase-type representation to define an ODE-based mean field
model for systems with hyperexponential job sizes. However, by doing so it appears hard 
(if not impossible) to introduce a partial ordering on the state space that is preserved by the set of ODEs over time.
We therefore propose to use a different phase-type representation, being the Coxian representation
introduced below. 
Note that the choice of the phase-type representation does {\it not} affect the main performance measures 
of the system, such as the queue length or response time distribution. It obviously does
affect measures such as the joint distribution of the queue length and service phase as different
representations of the same distributions do not even require to have the same number of phases $n$. 

A cdf $F$ is a Coxian distribution if and only if
it has a phase-type representation with $\alpha=(1,0,\ldots,0)$ and
a matrix $S$ of the following form
\begin{align}\label{eq:Q}
S=\begin{bmatrix}
-\mu_1 & p_1 \mu_1 & & & \\
 & -\mu_2 & p_2 \mu_2 & &\\
& & \ddots & \ddots &  & \\
& & & -\mu_{n-1} & p_{n-1} \mu_{n-1} \\
& &  &  & -\mu_n 
\end{bmatrix},
\end{align}
with $\mu_i > 0$ and $0 < p_i < 1$. Thus, $F(t) = 1 - (1,0,\ldots,0) e^{St} \1$.
For ease of presentation we define $p_n=0$. 

Cumani \cite{Cumani1982} showed that any distribution that has a phase type representation $(\alpha,S)$ with $S$ triangular is a 
Coxian distribution (of at most the same order $n$).
Further, as mixtures of Erlang distributions with common scale parameter are triangular, the class of Coxian distributions
is dense on the space of distributions on $\mathbb{R}^+$ \cite[p. 163-164]{tijms1994stochastic}. 
We now introduce a subclass of the set of all Coxian distributions.

\begin{defin}
The class $\mathcal{C}_0$ of distributions on $\mathbb{R}^+$ is defined as the class of distributions with
a Coxian representation such that  $\mu_i(1-p_i)$ is decreasing in $i$.
\end{defin}

In this paper we  prove global attraction for a class of ODE-based mean field models 
where the service time distribution belongs to $\mathcal{C}_0$.
Let $\mathcal{C}_{Cox}$ be the set of all Coxian distributions and $\mathcal{C}_{hexp}$ the set of all hyperexponential distributions,
then clearly $\mathcal{C}_0 \subset \mathcal{C}_{Cox}$ and $\mathcal{C}_{hexp} \subset \mathcal{C}_{Cox}$ (the latter due to Cumani).
We now prove that $\mathcal{C}_{hexp}$ is a strict subclass of $\mathcal{C}_0$.
We first derive a simple explicit expression for the parameters of a Coxian representation
of a hyperexponential distribution. To do so we start with a technical lemma.

\begin{lem}\label{lem:mus}
For any $\ell > k \geq 1$ and $\mu_j \not= \mu_k$ for $j > k$, we have
\[\sum_{i=k+1}^\ell \frac{\prod_{v=k}^{i-1} (\mu_v-\mu_\ell)}{\prod_{j=k+1}^{i} (\mu_j-\mu_k)} = -1.\]
\end{lem}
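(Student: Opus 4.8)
The plan is to prove the identity by writing each summand as a telescoping difference. Denote the $i$-th term by $T_i := \prod_{v=k}^{i-1}(\mu_v-\mu_\ell)\,/\,\prod_{j=k+1}^{i}(\mu_j-\mu_k)$, and introduce the auxiliary quantity
\[
g(i) := \frac{\prod_{v=k}^{i-1}(\mu_v-\mu_\ell)}{\prod_{j=k+1}^{i-1}(\mu_j-\mu_k)}, \qquad i=k+1,\dots,\ell+1,
\]
which is just $T_i$ with the last factor $(\mu_i-\mu_k)$ removed from the denominator, so that $T_i = g(i)/(\mu_i-\mu_k)$ for $k+1\le i\le\ell$. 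All these products are well defined and the denominators are nonzero precisely because $\mu_j\neq\mu_k$ for $j>k$, and by hypothesis $\mu_\ell\neq\mu_k$, which legitimizes the division by $\mu_\ell-\mu_k$ below.

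First I would record the one-step ratio $g(i+1)/g(i) = (\mu_i-\mu_\ell)/(\mu_i-\mu_k)$, immediate from cancelling common factors. From this,
\[
g(i)-g(i+1) = g(i)\left(1-\frac{\mu_i-\mu_\ell}{\mu_i-\mu_k}\right) = g(i)\,\frac{\mu_\ell-\mu_k}{\mu_i-\mu_k} = (\mu_\ell-\mu_k)\,T_i ,
\]
so every summand equals $(g(i)-g(i+1))/(\mu_\ell-\mu_k)$. Summing over $i=k+1,\dots,\ell$, the right-hand side telescopes to $(g(k+1)-g(\ell+1))/(\mu_\ell-\mu_k)$.

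It then remains to evaluate the two boundary values. Here $g(k+1) = \mu_k-\mu_\ell$ (the denominator product is empty and the numerator has the single factor $v=k$), while $g(\ell+1)=0$ because the numerator $\prod_{v=k}^{\ell}(\mu_v-\mu_\ell)$ now contains the vanishing factor $v=\ell$. Substituting gives $\sum_{i=k+1}^{\ell} T_i = (\mu_k-\mu_\ell)/(\mu_\ell-\mu_k) = -1$, as claimed. As a sanity check, in the edge case $\ell=k+1$ the sum reduces to the single term $(\mu_k-\mu_{k+1})/(\mu_{k+1}-\mu_k)=-1$, consistent with the formula.

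The only real obstacle is spotting the correct telescoping, i.e. the definition of $g(i)$; once that is in place the remaining computation is routine bookkeeping with empty products. An alternative would be induction on $\ell-k$, peeling off the $i=\ell$ term and re-indexing, but that is messier since $\mu_\ell$ occurs in every numerator, so I would favour the telescoping argument.
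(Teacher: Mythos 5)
Your proof is correct: the definition of $g(i)$, the one-step ratio, the identity $g(i)-g(i+1)=(\mu_\ell-\mu_k)T_i$, and the boundary values $g(k+1)=\mu_k-\mu_\ell$ and $g(\ell+1)=0$ all check out, and the hypothesis $\mu_j\neq\mu_k$ for $j>k$ is exactly what you need for the divisions (including by $\mu_\ell-\mu_k$). The paper proves the same identity by a different organization of the same cancellation: it rewrites the sum in nested (Horner) form
\[
\frac{\mu_k-\mu_\ell}{\mu_{k+1}-\mu_k}\left(1+\frac{\mu_{k+1}-\mu_\ell}{\mu_{k+2}-\mu_k}\left(1+\cdots\right)\right)
\]
and collapses it from the innermost bracket outward using $\frac{\mu_{i-1}-\mu_\ell}{\mu_i-\mu_k}\bigl[1+\frac{\mu_i-\mu_\ell}{\mu_\ell-\mu_k}\bigr]=\frac{\mu_{i-1}-\mu_\ell}{\mu_\ell-\mu_k}$, which is the reciprocal form of your relation $1-\frac{\mu_i-\mu_\ell}{\mu_i-\mu_k}=\frac{\mu_\ell-\mu_k}{\mu_i-\mu_k}$. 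Your telescoping version has the advantage of exhibiting an explicit antidifference and handling all partial sums and edge cases uniformly (the $g(\ell+1)=0$ observation replaces the paper's implicit use of the innermost bracket), whereas the paper's nested collapse is more compact on the page but requires the reader to verify the inside-out recursion informally. Either argument is acceptable; yours is, if anything, slightly more transparent.
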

\begin{proof}
The sum can be written as
\[\frac{\mu_k-\mu_\ell}{\mu_{k+1}-\mu_k} \left( 1+\frac{\mu_{k+1}-\mu_\ell}{\mu_{k+2}-\mu_k} \left(1+\ldots  
\frac{\mu_{\ell-2}-\mu_\ell}{\mu_{\ell-1}-\mu_k} \left( 1+
\frac{\mu_{\ell-1}-\mu_\ell}{\mu_{\ell}-\mu_k} \right)  \right) \right).\]
As $(\mu_{i-1}-\mu_\ell)/(\mu_{i}-\mu_k) [ 1+
(\mu_{i}-\mu_\ell)/(\mu_{\ell}-\mu_k)] = (\mu_{i-1}-\mu_\ell)/(\mu_{\ell}-\mu_k)$ this expression collapses to $-1$.
\end{proof}

\begin{prop}\label{prop:hyper}
Let $F(t) = 1 - \sum_{k=1}^n \tilde p_k e^{-\mu_k t}$ be a hyperexponential distribution and assume
without loss of generality that $\mu_1 > \mu_2 > \ldots > \mu_n > 0$. Then, $F(t)$ has a Coxian representation
with parameters $\mu_i$ and 
\begin{align}\label{eq:pi}
p_i = \frac{\sum_{k=i+1}^n \tilde p_k\prod_{j=1}^{i} (1-\frac{\mu_k}{\mu_j})}{\sum_{k=i}^n \tilde p_k  \prod_{j=1}^{i-1} 
(1-\frac{\mu_k}{\mu_j}) }.
\end{align}
\end{prop}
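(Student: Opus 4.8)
The plan is to prove Proposition~\ref{prop:hyper} by checking that the Coxian distribution with the parameters $\mu_i$ and the $p_i$ from \eqref{eq:pi} has the same Laplace--Stieltjes transform (LST) as $F$; since the LST determines a distribution on $\mathbb{R}^+$, this suffices. Write $\hat f(z)=\sum_{k=1}^n\tilde p_k\,\mu_k/(z+\mu_k)$ for the LST of $F$ and introduce the shorthand
\[
A_i=\sum_{k=i}^n\tilde p_k\prod_{j=1}^{i-1}\Bigl(1-\tfrac{\mu_k}{\mu_j}\Bigr),\qquad i=1,\dots,n+1,
\]
with empty products equal to $1$ and empty sums equal to $0$, so that $A_1=\sum_k\tilde p_k=1$ and $A_{n+1}=0$. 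The numerator of \eqref{eq:pi} is exactly $A_{i+1}$ and its denominator exactly $A_i$, hence $p_i=A_{i+1}/A_i$; in particular $p_n=0$. Before anything else I would verify that this is a genuine Coxian representation, i.e.\ that $0<p_i<1$ for $i<n$: since $\mu_1>\dots>\mu_n$, each factor $1-\mu_k/\mu_j$ with $k>j$ is strictly positive, so $A_i>0$ for $i\le n$, and a one-line computation gives $A_i-A_{i+1}=\tilde p_i\prod_{j<i}(1-\mu_i/\mu_j)+\sum_{k>i}\tilde p_k\,(\mu_k/\mu_i)\prod_{j<i}(1-\mu_k/\mu_j)>0$. (We may assume all $\tilde p_k>0$, else we drop the redundant phases.)

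The second step is to write the LST of the Coxian chain by conditioning on the phase $m$ in which absorption occurs, which happens with probability $p_1\cdots p_{m-1}(1-p_m)$ and contributes the LST $\prod_{j=1}^m\mu_j/(z+\mu_j)$ of a sum of independent exponentials, giving
\[
\hat f_{\mathrm{Cox}}(z)=\sum_{m=1}^n\Bigl(\prod_{j=1}^{m-1}p_j\Bigr)(1-p_m)\prod_{j=1}^m\frac{\mu_j}{z+\mu_j}.
\]
Since $p_j=A_{j+1}/A_j$, the factor $\prod_{j<m}p_j$ telescopes to $A_m/A_1=A_m$, so the weight is $A_m-A_{m+1}$. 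I would then expand each product in partial fractions, $\prod_{j=1}^m\mu_j/(z+\mu_j)=\sum_{v=1}^m\bigl(\prod_{j\le m,\,j\ne v}\mu_j/(\mu_j-\mu_v)\bigr)\mu_v/(z+\mu_v)$, interchange the $m$- and $v$-summations, and compare the coefficient of $\mu_v/(z+\mu_v)$ in $\hat f_{\mathrm{Cox}}$ with the coefficient $\tilde p_v$ in $\hat f$. As the functions $\mu_v/(z+\mu_v)$, $v=1,\dots,n$, are linearly independent (distinct poles), the whole statement reduces to the family of identities
\[
\sum_{m=v}^n(A_m-A_{m+1})\prod_{\substack{j\le m\\ j\ne v}}\frac{\mu_j}{\mu_j-\mu_v}=\tilde p_v,\qquad 1\le v\le n
\]
(equivalently, matching the residues of the two rational LSTs at the simple poles $-\mu_v$).

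To establish this identity I would apply summation by parts in $m$. Writing $B_m=\prod_{j\le m,\,j\ne v}\mu_j/(\mu_j-\mu_v)$ and using $A_{n+1}=0$, the left-hand side becomes $A_vB_v+\sum_{m=v+1}^nA_m(B_m-B_{m-1})$, and for $m>v$ one has $B_m-B_{m-1}=B_{m-1}\,\mu_v/(\mu_m-\mu_v)$. Expanding $A_vB_v$, the term $k=v$ collapses to exactly $\tilde p_v$, so it remains to show that the leftover $\sum_{k>v}\tilde p_k\prod_{j<v}\frac{\mu_j-\mu_k}{\mu_j-\mu_v}$ is cancelled by $\sum_{m>v}A_mB_{m-1}\,\mu_v/(\mu_m-\mu_v)$. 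Substituting the definitions of $A_m$ and $B_{m-1}$, cancelling the common $\mu_j$ factors, reinserting the $j=v$ factor into the product over $j$, and swapping the $m$- and $k$-sums, this cancellation reduces, for each fixed $k>v$, to
\[
\sum_{m=v+1}^{k}\frac{\prod_{j=v}^{m-1}(\mu_j-\mu_k)}{\prod_{j=v+1}^{m}(\mu_j-\mu_v)}=-1,
\]
which is precisely Lemma~\ref{lem:mus} (after renaming its indices $k,\ell$ to $v,k$). The hard part is exactly this last stretch of bookkeeping: tracking the telescoping products, getting the summation ranges right when the double sums are interchanged, and recognizing that the surviving inner sum is an instance of Lemma~\ref{lem:mus}; everything else is routine partial-fraction algebra together with the positivity check from the first step.
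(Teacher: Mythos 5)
Your proof is correct and follows essentially the same route as the paper: match the Laplace--Stieltjes transforms, expand the Coxian transform in partial fractions, and reduce the resulting residue identities at the poles $-\mu_v$ to Lemma~\ref{lem:mus}. The only difference is organizational --- you substitute the claimed $p_i=A_{i+1}/A_i$ and verify the identities directly by Abel summation, whereas the paper solves the linear system \eqref{eq:pktilde} for the weights $(1-p_i)\prod_{j=1}^{i-1}p_j$ by backward induction; both hinge on exactly the same application of Lemma~\ref{lem:mus}.
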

\begin{proof}
We show that both the hyperexponential and Coxian representations are equivalent by showing that both distributions
have the same Laplace Stieltjes transform $\tilde F(s)$ (LST). These transforms are given by
\[\tilde F_{hexp}(s) = \sum_{k=1}^n \tilde p_k \frac{\mu_k}{s+\mu_k}, \]
and
\[\tilde F_{Cox}(s) = \sum_{i=1}^n (1-p_i) \left( \prod_{j=1}^{i-1} p_j \right) 
\prod_{j=1}^i \frac{\mu_j}{s+\mu_j}, \]
as with probability $(1-p_i) \prod_{j=1}^{i-1} p_j$ we visit the first $i$ phases for the Coxian
representation. Using a partial fraction expansion for $1/\prod_{j=1}^i (s+\mu_j)$, we get
\[\tilde F_{Cox}(s) = \sum_{k=1}^n \left( \sum_{i=k}^n  (1-p_i) \left( \prod_{j=1}^{i-1} p_j \right)  
 \prod_{j=1, j\not= k}^i \frac{\mu_j}{\mu_j-\mu_k}  \right) \frac{\mu_k}{s+\mu_k}. \]
Hence, $\tilde F_{hexp}(s)=\tilde F_{Cox}(s)$ if
\begin{align}\label{eq:pktilde}
\tilde p_k = \sum_{i=k}^n  (1-p_i) \left( \prod_{j=1}^{i-1} p_j \right)  
 \prod_{j=1, j\not= k}^i \frac{\mu_j}{\mu_j-\mu_k}. 
\end{align}
This is a linear system in the unknowns $(1-p_i)\prod_{j=1}^{i-1} p_j$ and we now show that its solution
can be expressed as 
\begin{align}\label{eq:prodpi}
(1-p_i) \prod_{j=1}^{i-1} p_j  &=  \sum_{\ell=i}^n  \tilde p_\ell \frac{\mu_\ell}{\mu_i} \prod_{v=1}^{i-1} (1-\frac{\mu_\ell}{\mu_v}). 
\end{align}
For $i=n$ this is immediate from \eqref{eq:pktilde} with $k=n$ (as $p_n = 0$).
We now apply backward induction on $i$. Assume the result holds for $i=k+1,\ldots,n$. From  \eqref{eq:pktilde} we find
\begin{align}\label{eq:step(a)}
\tilde p_k \prod_{j=1}^{k-1} (1-\frac{\mu_k}{\mu_j}) &= (1-p_k)\prod_{j=1}^{k-1} p_j  \nonumber \\
& + \underbrace{\sum_{i=k+1}^n (1-p_i)\left( \prod_{j=1}^{i-1} p_j \right) \prod_{j=k+1}^i \frac{\mu_j}{\mu_j-\mu_k}}_{(a)}.
\end{align}
Applying induction and switching sums yields that (a) equals
\begin{align*}
& \sum_{\ell=k+1}^n \tilde p_\ell \frac{\mu_\ell}{\mu_k} 
\sum_{i=k+1}^\ell \left(\prod_{v=1}^{i-1} (1-\frac{\mu_\ell}{\mu_v}) \right) \left(\prod_{j=k+1}^i \frac{\mu_j}{\mu_j-\mu_k}\right) \frac{\mu_k}{\mu_i}\\
&= \sum_{\ell=k+1}^n \tilde p_\ell \frac{\mu_\ell}{\mu_k} \prod_{v=1}^{k-1} (1-\frac{\mu_v}{\mu_j}) \\
& \ \ \ \ \ \ \cdot \underbrace{\left(\sum_{i=k+1}^\ell \left(\prod_{v=k}^{i-1} \frac{\mu_v-\mu_\ell}{\mu_v} \right) 
\left(\prod_{j=k+1}^i \frac{\mu_j}{\mu_j-\mu_k}\right) \frac{\mu_k}{\mu_i}\right)}_{(b)}.
\end{align*}
The expression in (b) is equal to $-1$ due to Lemma \ref{lem:mus}, which allows us to conclude that \eqref{eq:prodpi} holds due to
\eqref{eq:step(a)}.
Using \eqref{eq:prodpi}, $\prod_{j=1}^{i-1} p_j = \prod_{j=1}^{i} p_j + (1-p_i)\prod_{j=1}^{i-1} p_j$ and backward induction on $i$, 
we may conclude that $\tilde F_{hexp}(s)=\tilde F_{Cox}(s)$  if
\begin{align}\label{eq:prodpi2}
\prod_{j=1}^i  p_j &= \sum_{k=i}^n  \tilde p_k \prod_{j=1}^i (1-\frac{\mu_k}{\mu_j}) = \sum_{k=i+1}^n  \tilde p_k \prod_{j=1}^i (1-\frac{\mu_k}{\mu_j}).
\end{align}
The expression in \eqref{eq:pi} is now immediate, where we note that $p_i \in (0,1)$ as $\tilde p_k > 0$ and 
$0 < (1-\mu_k/\mu_j) < 1$ for $j < k$.
\end{proof}

The above result may be of separate interest. We now use it to establish the following theorem:

\begin{theo}
The class of hyperexponential distributions $\mathcal{C}_{hexp}$ is a subclass of $\mathcal{C}_0$.
\end{theo}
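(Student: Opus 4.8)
The plan is to show that the Coxian parameters $p_i$ given explicitly by Proposition~\ref{prop:hyper} satisfy the monotonicity condition defining $\mathcal{C}_0$, namely that $\mu_i(1-p_i)$ is decreasing in $i$. Since Proposition~\ref{prop:hyper} already gives us $p_i$ in closed form (and guarantees $p_i\in(0,1)$, $\mu_i>0$), the whole task reduces to a single inequality: $\mu_i(1-p_i) \ge \mu_{i+1}(1-p_{i+1})$ for all $i$. The strategy is to compute $1-p_i$ from \eqref{eq:pi} by combining numerator and denominator over the common structure, obtaining
\[
1 - p_i = \frac{\sum_{k=i}^n \tilde p_k \prod_{j=1}^{i-1}(1-\tfrac{\mu_k}{\mu_j}) - \sum_{k=i+1}^n \tilde p_k \prod_{j=1}^{i}(1-\tfrac{\mu_k}{\mu_j})}{\sum_{k=i}^n \tilde p_k \prod_{j=1}^{i-1}(1-\tfrac{\mu_k}{\mu_j})},
\]
and then to simplify the numerator. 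The term $k=i$ contributes $\tilde p_i \prod_{j=1}^{i-1}(1-\mu_i/\mu_j)$ untouched, while for $k>i$ the two sums combine as $\tilde p_k \prod_{j=1}^{i-1}(1-\tfrac{\mu_k}{\mu_j})\bigl[1 - (1-\tfrac{\mu_k}{\mu_i})\bigr] = \tilde p_k \tfrac{\mu_k}{\mu_i}\prod_{j=1}^{i-1}(1-\tfrac{\mu_k}{\mu_j})$.

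Carrying this out, I expect
\[
\mu_i(1-p_i) = \frac{\sum_{k=i}^n \tilde p_k \,\mu_k \prod_{j=1}^{i-1}(1-\tfrac{\mu_k}{\mu_j})}{\sum_{k=i}^n \tilde p_k \prod_{j=1}^{i-1}(1-\tfrac{\mu_k}{\mu_j})},
\]
which is a weighted average of $\mu_i,\mu_{i+1},\ldots,\mu_n$ with weights $w_k^{(i)} = \tilde p_k \prod_{j=1}^{i-1}(1-\mu_k/\mu_j) \ge 0$. Equivalently, denoting by $\mu^{(i)}$ a random variable taking value $\mu_k$ with probability proportional to $w_k^{(i)}$, we have $\mu_i(1-p_i) = \mathbb{E}[\mu^{(i)}]$. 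Note also $\prod_{j=1}^{i-1}(1-\mu_k/\mu_j) = 0$ for $k<i$, consistent with the sum starting at $k=i$. The claim $\mu_i(1-p_i)$ decreasing in $i$ becomes: these weighted averages decrease as $i$ increases.

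**The key step** is then to compare $\mathbb{E}[\mu^{(i)}]$ with $\mathbb{E}[\mu^{(i+1)}]$. Going from $i$ to $i+1$, two things happen: the index $k=i$ drops out of the support (since $\prod_{j=1}^{i}(1-\mu_i/\mu_j)=0$), and every remaining weight $w_k^{(i)}$ for $k\ge i+1$ is multiplied by the factor $(1-\mu_k/\mu_i)\in(0,1)$. Because $\mu_1>\cdots>\mu_n$, this factor $(1-\mu_k/\mu_i)$ is \emph{increasing} in $k$ over $k\ge i+1$: larger $k$ (smaller $\mu_k$) gets reweighted upward relative to smaller $k$. So passing from the $i$-average to the $(i+1)$-average, we first delete the largest atom $\mu_i$ (which can only decrease the mean, since $\mu_i \ge \mu_k$ for $k > i$ and in fact $\mu_i = \max$), and then we tilt the remaining distribution toward smaller $\mu_k$'s. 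Both operations decrease the mean. Making this rigorous is a short computation: one shows that for nonnegative weights $a_k$ and a common decreasing sequence, multiplying $a_k$ by an increasing function of $k$ does not increase the weighted average — essentially an FKG/Chebyshev-sum correlation inequality, or more directly the elementary fact that $\sum a_k b_k c_k \cdot \sum a_k \le \sum a_k b_k \cdot \sum a_k c_k$ when $b$ is decreasing and $c$ increasing in $k$ over the common support.

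**The main obstacle** is organizing this cleanly rather than any deep difficulty: one must handle the dropped atom and the reweighting in a single comparison, and be careful that the weighted-average identity for $\mu_i(1-p_i)$ is correctly derived from \eqref{eq:pi} (the algebraic collapse in the numerator is the place an error could creep in). Once the representation $\mu_i(1-p_i) = \bigl(\sum_{k\ge i}\tilde p_k \mu_k \prod_{j<i}(1-\mu_k/\mu_j)\bigr)\big/\bigl(\sum_{k\ge i}\tilde p_k \prod_{j<i}(1-\mu_k/\mu_j)\bigr)$ is in hand, the monotonicity follows from the correlation-inequality argument above, and since hyperexponential distributions with at least two distinct rates cannot be Coxian with all $p_i$ equal to $0$, strictness of the inclusion $\mathcal{C}_{hexp}\subsetneq\mathcal{C}_0$ is witnessed by any Coxian distribution with $\mu_i(1-p_i)$ not expressible in this weighted-average form — for instance a strict mixture of Erlangs — completing the proof of the theorem as stated.
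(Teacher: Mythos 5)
Your proposal is correct and follows essentially the same route as the paper: you derive the identical weighted-average representation $\mu_i(1-p_i)=\bigl(\sum_{k\ge i}\tilde p_k\,\mu_k\prod_{j<i}(1-\mu_k/\mu_j)\bigr)\big/\bigl(\sum_{k\ge i}\tilde p_k\prod_{j<i}(1-\mu_k/\mu_j)\bigr)$ (the paper's \eqref{eq:1minpimui}), and your Chebyshev/correlation inequality with the affine tilt $c_k=1-\mu_k/\mu_i$ reduces algebraically to exactly the paper's closing step $E[X]^2<E[X^2]$. The final aside about strictness of the inclusion is not required by the stated theorem and its proposed witness is shaky (an Erlang mixture needs $p_i=1$, which the paper's Coxian form excludes); the paper establishes strictness separately via an explicit numerical counterexample.
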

\begin{proof}
Given Proposition \ref{prop:hyper}, it suffices to show that $(1-p_i)\mu_i$ is decreasing in $i$. 
As
\begin{align*}
\mu_i \sum_{k=i+1}^n &\tilde p_k  \prod_{j=1}^{i} (1-\frac{\mu_k}{\mu_j}) = 
 \sum_{k=i+1}^n \tilde p_k  \mu_i \prod_{j=1}^{i-1} (1-\frac{\mu_k}{\mu_j})  \\
&- \sum_{k=i+1}^n \tilde p_k  \mu_k \prod_{j=1}^{i-1} (1-\frac{\mu_k}{\mu_j}) ,
\end{align*}
one readily obtains from \eqref{eq:pi} that
\begin{align}\label{eq:1minpimui}
(1-p_i)\mu_i = \frac{\sum_{k=i}^n \tilde p_k \mu_k \prod_{j=1}^{i-1}(1- \frac{\mu_k}{\mu_j})}{
\sum_{k=i}^n \tilde p_k \prod_{j=1}^{i-1} (1-\frac{\mu_k}{\mu_j})}.
\end{align}
As $(1-\frac{\mu_k}{\mu_i}) = 0$ for $k=i$, we can start both sums in the expression for 
$(1-p_{i+1})\mu_{i+1}$ in $k=i$. 
Further, as $(1-\frac{\mu_k}{\mu_j}) > 0$ for $k > j$, we can rewrite $(1-p_i)\mu_i > (1-p_{i+1})\mu_{i+1}$ as
\begin{align*}
&\left( \sum_{k=i}^n \tilde p_k \mu_k  \xi_{k,i-1} \right)
\left(\sum_{k=i}^n \tilde p_k \xi_{k,i-1} -
\sum_{k=i}^n \tilde p_k  \frac{\mu_k}{\mu_i} \xi_{k,i-1} \right)> \\
&   \left(\sum_{k=i}^n \tilde p_k \xi_{k,i-1} \right)
\left(\sum_{k=i}^n \tilde p_k \mu_k \xi_{k,i-1} -
\sum_{k=i}^n \tilde p_k  \frac{\mu_k^2}{\mu_i} \xi_{k,i-1} \right),
\end{align*}
where we denoted $\prod_{j=1}^{i-1} (1-\frac{\mu_k}{\mu_j})$ as $\xi_{k,i-1}$.
This can be restated as
\begin{align*}
&\left( \sum_{k=i}^n \tilde p_k \mu_k  \xi_{k,i-1} \right)^2 < \left(\sum_{k=i}^n \tilde p_k \xi_{k,i-1} \right) \left( \sum_{k=i}^n \tilde p_k \mu_k^2 \xi_{k,i-1} \right),
\end{align*}
which is equivalent to
\begin{align*}
&\left( \sum_{k=i}^n \mu_k \frac{ \tilde p_k   \xi_{k,i-1}}{
\sum_{k=i}^n \tilde p_k \xi_{k,i-1}} \right)^2 <
\sum_{k=i}^n \mu_k^2 \frac{\tilde p_k   \xi_{k,i-1}}{ 
\sum_{k=i}^n \tilde p_k \xi_{k,i-1}}.
\end{align*}
By defining $X_i$ such that $P[X_i = \mu_k] = \tilde p_k \xi_{k,i-1}/
\sum_{k=i}^n \tilde p_k \xi_{k,i-1}$, the above inequality holds
as $E[X]^2 < E[X^2]$ for any random variable $X$ (that is not deterministic).
\end{proof}
When $n=2$ one can show that all Coxian distributions with $(1-p_1)\mu_1 > (1-p_2) \mu_2 = \mu_2$ are also hyperexponential distributions.
However for $n> 2$ the example below shows that this is not the case, so the set of hyperexponential distributions $\mathcal{C}_{hexp}$ is a {\it strict} 
subclass of the class $\mathcal{C}_0$. Consider the Coxian distribution with parameters 
$\mu_1 = 1, \mu_2 = 2$, $\mu_3=0.1$, $p_1 = 0.1$ and $p_2 = 0.8$. This distribution belongs to the class $\mathcal{C}_0$. However
using \eqref{eq:pktilde}, we see that its LST is given by
\[\tilde F_{Cox}(s) = \frac{83}{90}\frac{1}{s+1}-\frac{3}{190}\frac{2}{s+2}+\frac{16}{171}\frac{0.1}{s+0.1}.\]
This distribution is not a hyperexponential as $\tilde p_2$ is negative.\\



Let $R_i$ be the expected remaining service time of a job in phase $i$. Clearly, $R_n = 1/\mu_n$
and $R_{i-1} = 1/\mu_{i-1} + p_{i-1} R_i$ for $i=2, \ldots,n$. Without loss of generality we assume
that the mean job size equals one, which implies that $R_1 = 1$ (as all jobs start in phase $1$ and stay there for an 
exponential amount of time). 
For later use, we rewrite this as
\begin{align}\label{eq:Ris}
R_i p_{i-1} \mu_{i-1} = \mu_{i-1} R_{i-1} -1.
\end{align} 
\begin{lem}\label{applem:Ri}
If $\mu_i(1-p_i)$ is decreasing in $i$, we have $R_i > R_{i-1}$, for $i=2,\ldots,n$. 
\end{lem}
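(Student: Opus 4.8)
The plan is to reduce the claim to a single auxiliary inequality that is well suited to backward induction. Writing $\nu_i := \mu_i(1-p_i)$ (so that $\nu_n = \mu_n$ since $p_n = 0$), the hypothesis says precisely that $\nu_1 \geq \nu_2 \geq \cdots \geq \nu_n$. First I would rewrite the recursion $R_{i-1} = 1/\mu_{i-1} + p_{i-1} R_i$ (equivalently \eqref{eq:Ris}) in the form $R_i - R_{i-1} = (1-p_{i-1})R_i - 1/\mu_{i-1} = (1-p_{i-1})\bigl(R_i - 1/\nu_{i-1}\bigr)$. Since $0 < p_{i-1} < 1$, the lemma is then equivalent to showing $R_i > 1/\nu_{i-1}$ for $i = 2,\ldots,n$.

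The core step is to prove, by backward induction on $i$, that $R_i \geq 1/\nu_i$ for all $i$. The base case $i=n$ is exactly $R_n = 1/\mu_n = 1/\nu_n$. For the inductive step I would use the identity $1/\mu_i - 1/\nu_i = -p_i/\nu_i$ (which follows from $\nu_i = \mu_i(1-p_i)$) to rewrite $R_i = 1/\mu_i + p_i R_{i+1}$ as $R_i - 1/\nu_i = p_i\bigl(R_{i+1} - 1/\nu_i\bigr)$. By the induction hypothesis $R_{i+1} \geq 1/\nu_{i+1}$, and by monotonicity $\nu_{i+1} \leq \nu_i$, hence $1/\nu_{i+1} \geq 1/\nu_i$; combining these gives $R_{i+1} - 1/\nu_i \geq 0$, and since $p_i > 0$ we conclude $R_i \geq 1/\nu_i$. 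Now for $i \geq 2$ we have $R_i \geq 1/\nu_i > 1/\nu_{i-1}$, the last inequality using that the sequence $\nu_i$ is strictly decreasing; by the reformulation of the first paragraph this is exactly $R_i > R_{i-1}$.

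I expect the main (and essentially the only) obstacle to be identifying the right quantity to induct on: the natural comparison turns out to be $R_i$ against $1/\nu_i = 1/(\mu_i(1-p_i))$ rather than against $1/\mu_i$, together with noticing the telescoping-type identity $R_i - 1/\nu_i = p_i(R_{i+1} - 1/\nu_i)$ that makes the induction close using exactly the monotonicity assumption of $\mathcal{C}_0$. Once these observations are made the verification is a two-line computation. The only subtlety worth flagging is that the strict conclusion $R_i > R_{i-1}$ genuinely uses that $\mu_i(1-p_i)$ is \emph{strictly} decreasing (for instance, a degenerate Coxian representation of an exponential has $\mu_i(1-p_i)$ constant and all $R_i$ equal); if one only assumes it is non-increasing, the same argument yields the weak inequality $R_i \geq R_{i-1}$.
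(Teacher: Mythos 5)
Your proof is correct but follows a genuinely different route from the paper's. The paper writes out the closed-form series $R_i = 1/\mu_i + \sum_{j=i}^{n-1}\bigl(\prod_{k=i}^{j} p_k\bigr)/\mu_{j+1}$, forms $R_i - R_{i-1}$, bounds it term by term via the hypothesis rewritten as $(1-p_{i-1})/\mu_{j+1} > (1-p_{j+1})/\mu_{i-1}$ for $j \geq i-1$, and telescopes the result to $-\frac{1}{\mu_{i-1}}\prod_{k=i}^{n}p_k = 0$ using $p_n = 0$. You never expand $R_i$ at all: with $\nu_i = \mu_i(1-p_i)$ you reduce the claim to $R_i > 1/\nu_{i-1}$ through the identity $R_i - R_{i-1} = (1-p_{i-1})\bigl(R_i - 1/\nu_{i-1}\bigr)$, and then prove $R_i \geq 1/\nu_i$ by backward induction from the exact base case $R_n = 1/\nu_n$ and the one-line identity $R_i - 1/\nu_i = p_i\bigl(R_{i+1} - 1/\nu_i\bigr)$. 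Both identities check out, and the monotonicity of $\nu_i$ enters only through $1/\nu_{i+1} \geq 1/\nu_i$ in the induction and $1/\nu_i > 1/\nu_{i-1}$ at the end. Your argument is shorter and yields the reusable intermediate bound $R_i \geq 1/(\mu_i(1-p_i))$, which has a transparent meaning: $\nu_i$ dominates the completion hazard from phase $i$ onward, so the remaining service time is at least $1/\nu_i$ in mean; the paper's computation instead produces the explicit series for $R_i$ as a by-product, which it does not use elsewhere. Your closing caveat is consistent with the paper, whose own proof uses the strict inequality, i.e., reads ``decreasing'' as strict. The only cosmetic slip is that your opening paragraph states the hypothesis as $\nu_1 \geq \cdots \geq \nu_n$ while your conclusion needs the strict chain you correctly invoke later; state it strictly from the start.
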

\begin{proof}
The proof is presented in Appendix \ref{app:lemRi}.
\end{proof}

\paragraph{Remark:} Coxian distributions are sometimes defined using an alternate $(\alpha,S)$ representation given by
$\alpha=(\alpha_1,\ldots,\alpha_n)$ and
\begin{align}
S=\begin{bmatrix}
-\lambda_1 & \lambda_1 & & & \\
 & -\lambda_2 & \lambda_2 & &\\
& & \ddots & \ddots &  & \\
& & & -\lambda_{n-1} & \lambda_{n-1} \\
& &  &  & -\lambda_n 
\end{bmatrix},
\end{align}
with $\alpha_{n-i+1} = (1-p_i) \prod_{j=1}^{i-1} p_j $ and $\lambda_{n-i+1} = \mu_i$.\\

\subsection{Moment matching}\label{sec:moments}

In this section we study the range of the first three moments that can be matched with
a distribution belonging to class $\mathcal{C}_0$. We first establish that any 
distribution in $\mathcal{C}_0$ has a decreasing hazard rate. 

\begin{prop}\label{prop:DHR}
Any distribution belonging to $\mathcal{C}_0$ has a decreasing hazard rate.
\end{prop}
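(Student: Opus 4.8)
The plan is to argue via the hazard rate $r(t)=f(t)/\bar F(t)$, where $\bar F(t)=1-F(t)=(1,0,\ldots,0)e^{St}\1$ is the survival function and $f(t)=-(1,0,\ldots,0)e^{St}S\1$ the density of the Coxian distribution. Write $v_i(t)$ for the probability of being in service phase $i$ at time $t$, so that $(v_1(t),\ldots,v_n(t))=(1,0,\ldots,0)e^{St}$, and observe that the service completion rate out of phase $i$ equals $s_i:=\mu_i(1-p_i)$. Then $\bar F(t)=\sum_{i=1}^n v_i(t)$ and $f(t)=\sum_{i=1}^n s_i v_i(t)$, so $r(t)=\sum_{i=1}^n s_i v_i(t)/\bar F(t)$ is the mean service completion rate under the distribution of the phase conditioned on the job still being present. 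Since the defining property of $\mathcal{C}_0$ is precisely that $s_i$ is decreasing in $i$, it suffices to show that this conditional phase distribution is stochastically increasing in $t$: a stochastically larger phase then gives a smaller mean completion rate, so $r$ is non-increasing.

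To make this concrete, set $G_j(t)=\sum_{i=j}^n v_i(t)$ (so $G_1(t)=\bar F(t)$) and $\widehat G_j(t)=G_j(t)/G_1(t)=P[\text{phase}\ge j\mid\text{still in service at }t]$; these are well defined and smooth on $[0,\infty)$ since $\bar F(t)>0$. Summation by parts gives $r(t)=s_1+\sum_{i=2}^n(s_i-s_{i-1})\widehat G_i(t)$, so, as $s_i-s_{i-1}\le 0$ on $\mathcal{C}_0$, it is enough to prove that each $\widehat G_j$ is non-decreasing. Differentiating $v(t)$ through $v'(t)=v(t)S$ and telescoping the sum over phases $j,\ldots,n$ yields the identity $G_j'(t)=p_{j-1}\mu_{j-1}v_{j-1}(t)-\sum_{i\ge j}s_i v_i(t)$ for $j\ge 2$, i.e.\ the rate of probability flowing into $\{j,\ldots,n\}$ (there is no backward flow, since Coxian phases only increase) minus the absorption rate leaving it, together with $G_1'(t)=-f(t)$. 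Writing $A=G_j$, $B=G_1-G_j$, $\sigma_A=\sum_{i\ge j}s_iv_i$ and $\sigma_B=\sum_{i<j}s_iv_i$, a short manipulation reduces $\tfrac{d}{dt}\widehat G_j$ to a positive multiple of $p_{j-1}\mu_{j-1}v_{j-1}(A+B)+(A\sigma_B-B\sigma_A)$, and $A\sigma_B-B\sigma_A=\sum_{i\ge j}\sum_{k<j}v_iv_k(s_k-s_i)\ge 0$ because $s_k\ge s_i$ whenever $k<j\le i$.

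I expect the last step to be the real work: obtaining the telescoped formula for $G_j'$ and then rearranging $\tfrac{d}{dt}\widehat G_j$ until its sign is manifest. The $\mathcal{C}_0$ hypothesis is used in exactly one place, the double sum $\sum_{i\ge j}\sum_{k<j}v_iv_k(s_k-s_i)$, where monotonicity of the completion rates $s_i=\mu_i(1-p_i)$ makes every summand nonnegative; without this hypothesis the sign of $\widehat G_j'$ is genuinely indeterminate, consistent with the example following Proposition~\ref{prop:hyper} showing that abandoning this assumption also breaks the other structural properties used in the paper. Plugging the monotonicity of the $\widehat G_i$ back into $r(t)=s_1+\sum_{i=2}^n(s_i-s_{i-1})\widehat G_i(t)$ exhibits $r$ as a constant plus a sum of non-increasing functions, hence non-increasing, which is the assertion.
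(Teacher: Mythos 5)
Your proof is correct, and its first half is the same reduction the paper uses: rewrite the hazard rate by Abel summation as $s_1+\sum_{i=2}^n(s_i-s_{i-1})P[\tau_t\ge i\mid Y>t]$ with $s_i=\mu_i(1-p_i)$, so that under the $\mathcal{C}_0$ hypothesis it suffices to show the conditional phase distribution is stochastically increasing in $t$. Where you diverge is in proving that monotonicity. The paper argues probabilistically: it bounds $P[\tau_t\ge i, Y>t]\ge P[\tau_s\ge i, Y>s]\,P[Y>t\mid \tau_s\ge i, Y>s]\ge P[\tau_s\ge i, Y>s]\,P[Y>t\mid Y>s]$, the last step again invoking that the completion rate decreases in the phase. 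You instead differentiate $\widehat G_j=G_j/G_1$ directly, use the telescoped flow identity $G_j'=p_{j-1}\mu_{j-1}v_{j-1}-\sum_{i\ge j}s_iv_i$, and reduce the sign of $\widehat G_j'$ to the rearrangement inequality $\sum_{i\ge j}\sum_{k<j}v_iv_k(s_k-s_i)\ge 0$; I checked the algebra ($A'B-AB'=p_{j-1}\mu_{j-1}v_{j-1}(A+B)+A\sigma_B-B\sigma_A$) and it is right. Your route is more computational but makes the sign completely explicit and avoids the slightly informal stochastic-comparison step in the paper; the paper's route is shorter and needs no differentiation. One small quibble with your commentary: the $\mathcal{C}_0$ hypothesis is used in \emph{two} places in your argument, not one --- both in the Abel summation (where you need $s_i-s_{i-1}\le 0$) and in the double sum --- exactly as in the paper's version.
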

\begin{proof}
Let $\tau_t$ be the
service phase of a job at time $t$ (given that it started service at time $0$) and $Y$ the job size,
then the hazard rate $h(t)$ at time $t$ can be written as
\begin{align}\label{eq:h(t)}
h(t) = \sum_{i=1}^n P[\tau_t = i | Y > t] \mu_i (1-p_i).
\end{align}
We need to show that $h(s) \geq h(t)$ for $0 \leq s < t$. As the hazard rate $h(t)$ defined
in \eqref{eq:h(t)} can be rewritten as
\begin{align*}
h(t) &= \underbrace{P[\tau_t \geq  1 | Y > t]}_{= 1} \mu_1 (1-p_1) \\
&- \sum_{i=2}^n P[\tau_t \geq i | Y > t] 
\underbrace{(\mu_{i-1} (1-p_{i-1})-\mu_i (1-p_i))}_{ > 0}, 
\end{align*}
we find that $h(s) \geq h(t)$ if 
\[P[\tau_s \geq i | Y > s] \leq P[\tau_t \geq i | Y > t],\]
for $s < t$.
This inequality is immediate after noting that:
\begin{align*}
P[\tau_t  \geq i, Y > t] &\geq P[\tau_s  \geq i, Y > t] \\
&= P[\tau_s  \geq i, Y > s] P[\tau_s  \geq i, Y > t | \tau_s  \geq i, Y > s] \\
&=P[\tau_s  \geq i, Y > s] P[ Y > t | \tau_s  \geq i, Y > s] \\
&\geq P[\tau_s  \geq i, Y > s] P[ Y > t | Y > s] \\
& = P[\tau_s  \geq i, Y > s] P[ Y > t]/P[Y>s],
\end{align*}
where the second inequality is due to the fact that the rate at which a service completion
can occur decreases as the phase increases.
\end{proof}
Let $m_i = E[Y^i]$ be the $i$-th moment of the job size distribution $Y$.
For any phase type distribution with representation $(\alpha,S)$ we have
$m_i = i! \alpha (-S)^{-i} \1$.
In order to characterize the set of the first three moments that can be
matched by the distributions belonging to $\mathcal{C}_0$, we focus on
the second and third normalized moments:
\[n_2 = \frac{m_2}{m_1^2}, \ \ \ \ n_3 = \frac{m_3}{m_1 m_2}, \]
where $n_2, n_3 \geq 1$ for any positive valued distribution \cite{Osogami_PEVA}.
The advantage of using the normalized moments is that we no longer need to care about
the first moment. Indeed, if $(\alpha,S)$ matches $n_2$ and $n_3$ and has mean $1$,
then $(\alpha,S/m_1)$ still matches $n_2$ and $n_3$ and has mean $m_1$ (as dividing $S$
by $m_1$ changes the $i$-th moment by a factor $m_1^i$, which implies that $n_2$ and $n_3$ 
are not affected by dividing $S$ by $m_1$). 
Thus, if we found a distribution with mean $1$ in $\mathcal{C}_0$ that matches
$n_2$ and $n_3$, we can simply multiply the rates $\mu_i$ by $1/m_1$ to 
get any desired mean $m_1$. 

Let $\mathcal{A}_{n_2,n_3}^{(n)}$ be the set of normalized second and third moments
that can be matched with a distribution belonging to $\mathcal{C}_0$ with at most $n$ phases.
 
\begin{prop}
The set $\mathcal{A}_{n_2,n_3}^{(2)} = \{(n_2,n_3) | n_2 > 2, n_3 > \frac{3}{2}n_2\} \cup \{(2,3)\}$
\end{prop}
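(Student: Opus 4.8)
The plan is to analyze the $n=2$ case directly, since a Coxian distribution with two phases has only three free parameters $\mu_1, \mu_2, p_1$ (recall $p_2 = 0$), and the membership condition for $\mathcal{C}_0$ reduces to the single inequality $\mu_1(1-p_1) \geq \mu_2(1-p_2) = \mu_2$. First I would write down the LST, which by \eqref{eq:pktilde}-type reasoning is $\tilde F(s) = (1-p_1)\frac{\mu_1}{s+\mu_1} + p_1 \frac{\mu_1}{s+\mu_1}\frac{\mu_2}{s+\mu_2}$, and extract the first three moments $m_1, m_2, m_3$ as explicit rational functions of $\mu_1, \mu_2, p_1$ (using $m_i = i!\,\alpha(-S)^{-i}\1$). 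Normalizing to mean $1$ (which, as the paper notes, does not affect $n_2, n_3$), I would then compute $n_2 = m_2/m_1^2$ and $n_3 = m_3/(m_1 m_2)$ as functions of the two essential shape parameters.

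The core of the argument is then a two-directional computation. For the inclusion $\subseteq$: given any admissible Coxian parameters satisfying $\mu_1(1-p_1) \geq \mu_2$ (the $\mathcal{C}_0$ condition for $n=2$), I would verify the inequalities $n_2 > 2$ and $n_3 > \tfrac{3}{2} n_2$, treating the degenerate boundary case where the distribution collapses to an exponential (giving the isolated point $(2,3)$) separately — this degeneracy arises when $p_1 \to 0$ or $\mu_1 = \mu_2$, in which case $Y$ is exponential with $n_2 = 2, n_3 = 3$. For the reverse inclusion $\supseteq$: given any target $(n_2, n_3)$ with $n_2 > 2$ and $n_3 > \tfrac{3}{2} n_2$, I would exhibit explicit Coxian parameters (normalized to mean one) realizing these moments and check they satisfy $\mu_1(1-p_1) \geq \mu_2$; the point $(2,3)$ is realized by the exponential. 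A convenient route for the reverse direction is to solve the two moment equations for the shape parameters and show the solution always lands in the admissible region.

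I expect the main obstacle to be the reverse inclusion: after solving the (nonlinear) moment-matching system one must confirm both that the resulting $\mu_1, \mu_2, p_1$ are genuinely admissible (i.e. $\mu_i > 0$ and $0 < p_1 < 1$) and that the $\mathcal{C}_0$ condition $\mu_1(1-p_1) \geq \mu_2$ holds, and it is conceivable that for some $(n_2, n_3)$ in the claimed region the matching requires $p_1$ outside $(0,1)$ unless one is careful — so the bookkeeping of which parameter regime matches which moment region is the delicate part. A useful sanity check throughout is the DHR restriction from Proposition \ref{prop:DHR}: since every distribution in $\mathcal{C}_0$ has a decreasing hazard rate, it must have $n_2 \geq 2$, which is consistent with the strict bound $n_2 > 2$ off the boundary point. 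I would also use the known fact that for a two-phase Coxian the attainable $(n_2,n_3)$ region is exactly $\{n_2 > 3/2,\ n_3 > \tfrac{3}{2}n_2\} \cup \{(2,3)\} \cup \{(3/2, 2)\text{-type boundary}\}$ (the Erlang/Coxian moment region, cf.\ \cite{Osogami_PEVA}) and observe that intersecting it with the half-space $\mu_1(1-p_1)\geq\mu_2$ sharpens the lower bound on $n_2$ from $3/2$ to $2$; making that intersection precise is the cleanest way to organize the proof.
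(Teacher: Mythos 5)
Your plan is viable in outline and is genuinely more self-contained than the paper's proof, but as written it leaves the decisive step open. The paper does not compute anything: for the inclusion $\subseteq$ it invokes Proposition \ref{prop:DHR} (DHR implies $C_X\geq 1$, hence $n_2\geq 2$, with $n_2=2$ forcing the exponential) together with Theorem 3.1 of Bobbio et al.\ for the bound $n_3>\tfrac{3}{2}n_2$ valid for \emph{every} order-2 Coxian with $n_2>2$; and for the inclusion $\supseteq$ it uses Whitt's result that a two-phase \emph{hyperexponential} matches exactly the region $\{n_2>2,\ n_3>\tfrac{3}{2}n_2\}\cup\{(2,3)\}$, and then concludes membership in $\mathcal{C}_0$ for free from the earlier theorem that $\mathcal{C}_{hexp}\subset\mathcal{C}_0$ (via the explicit Coxian representation of Proposition \ref{prop:hyper}). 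That last observation is the idea your proposal is missing: by matching with a hyperexponential rather than with a generic order-2 Coxian, the delicate verification you flag --- that the solution of the moment equations lands in the admissible region $\mu_1(1-p_1)>\mu_2$, $0<p_1<1$ --- disappears entirely, because every two-phase hyperexponential automatically has a $\mathcal{C}_0$ Coxian representation. You yourself identify this verification as the obstacle and concede it is ``conceivable'' that it fails for some $(n_2,n_3)$; a proof cannot leave that open, so as it stands the $\supseteq$ direction is a genuine gap.

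Two smaller points. First, for the $\subseteq$ direction your direct verification of $n_3>\tfrac{3}{2}n_2$ from the explicit moment formulas is a nontrivial algebraic fact (it is exactly the content of the cited Bobbio--Horv\'ath--Telek bound for order-2 Coxians with $n_2>2$); it is provable by hand, but you should not treat it as routine bookkeeping, and note that it must hold for all order-2 Coxians in $\mathcal{C}_0$, not only those that happen to be hyperexponential. Second, the paper's class $\mathcal{C}_0$ requires $\mu_i(1-p_i)$ \emph{strictly} decreasing (this strictness is used elsewhere, e.g.\ in Lemma \ref{applem:Ri}), so the two-phase membership condition is $\mu_1(1-p_1)>\mu_2$ rather than the non-strict inequality you wrote; this does not change the answer but matters for where the degenerate exponential case sits. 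If you replace your ``solve the Coxian moment equations and check admissibility'' step by ``use Whitt's explicit two-phase hyperexponential fit and apply Proposition \ref{prop:hyper}'', your argument closes and coincides with the paper's.
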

\begin{proof}
By Proposition \ref{prop:DHR} any distribution part of $\mathcal{C}_0$ has a decreasing
hazard rate and therefore its squared coefficient of variation $C_X \geq 1$ \cite[p. 16-19]{stoyan1983comparison}. As $n_2 = 1 + C_X$, we have $n_2 \geq 2$.
Further if $n_2 = 2$, the distribution is the exponential distribution and $n_3$ therefore
equals $3$. For $n_2 > 2$ the value of $n_3$ must exceed $3n_2/2$ as 
Theorem 3.1 in \cite{bobbio4} indicates that this is the case for any order $2$ Coxian distribution 
with $n_2 > 2$ (see also Theorem 1 in \cite{Osogami_PEVA}). 
Thus it remains to show that $\mathcal{C}_0$ contains
a distribution that matches $n_2$ and $n_3$ for any $n_2 > 2$ and $n_3 > 3n_2/2 $.

The proposition in Section 3.1 of \cite{whitt_match_hyper} shows that the set of normalized
moments $n_2$ and $n_3$ that can be matched by a hyperexponential distribution
is exactly the set $\mathcal{A}_{n_2,n_3}^{(2)}$ and in such case the matching
can be achieved with just $2$ phases. In fact, the parameters 
of a two phase hyperexponential distribution that matches $n_2$ and $n_3$ are given by (3.5) and (3.6) in \cite{whitt_match_hyper}.
As all hyperexponential distributions belong to $\mathcal{C}_0$, this completes the proof. 
\end{proof}

We note that the proposition in Section 3.1  in \cite{whitt_match_hyper} indicates that we cannot
match a larger range of $(n_2,n_3)$ values by using more than two phases in
case we restrict ourselves to hyperexponential distributions. 
If we consider Coxian distributions with $n$ phases and
$n_2 > 2$, then Theorem 3.1 in \cite{bobbio4} indicates that we can match 
any $n_3 > (n+1)n_2/n$. Thus, the larger $n$, the lower $n_3$ can become, 
contrary to the class of hyperexponential distributions.

The next proposition shows that while class $\mathcal{C}_0$ lies somewhere between the
class of hyperexponential and Coxian distributions with $n_2 \geq 2$, increasing the
number of phases does {\it not} allow us to match a larger range of $n_3$ values. Thus,
as far as matching the first three moments is concerned, the class $\mathcal{C}_0$
does not provide more flexibility than the set of hyperexponential distributions.

 \begin{prop}
The set $\mathcal{A}_{n_2,n_3}^{(n)} = \mathcal{A}_{n_2,n_3}^{(2)}$ for any $n \geq 2$.
\end{prop}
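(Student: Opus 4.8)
The inclusion $\mathcal{A}_{n_2,n_3}^{(2)} \subseteq \mathcal{A}_{n_2,n_3}^{(n)}$ is trivial for $n \geq 2$, since a distribution in $\mathcal{C}_0$ with at most two phases has at most $n$ phases. The whole content is therefore the reverse inclusion $\mathcal{A}_{n_2,n_3}^{(n)} \subseteq \mathcal{A}_{n_2,n_3}^{(2)}$: for every $Y \in \mathcal{C}_0$ (with any number of phases, mean $m_1$ and moments $m_2,m_3$) the pair $(n_2,n_3)$ must lie in the set $\{n_2>2,\ n_3>\tfrac32 n_2\}\cup\{(2,3)\}$ of the preceding proposition. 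The plan is to reduce this to two statements about squared coefficients of variation, and then to obtain those from the decreasing hazard rate property of Proposition~\ref{prop:DHR}.

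First I would rewrite the target inequality $n_3>\tfrac32 n_2$ in terms of the equilibrium (stationary--excess) distribution $Y_e$ of $Y$. Writing $\bar F$ for the tail of $Y$, the density of $Y_e$ is $\bar F(t)/m_1$, so $E[Y_e]=m_2/(2m_1)$ and $E[Y_e^2]=m_3/(3m_1)$, and hence the squared coefficient of variation of $Y_e$ equals $C_{Y_e}=\tfrac{4m_1m_3}{3m_2^2}-1$. Since $n_2=m_2/m_1^2$ and $n_3=m_3/(m_1m_2)$, this yields $n_3\geq\tfrac32 n_2\iff C_{Y_e}\geq 1$, with the strict versions equivalent as well, while similarly $n_2\geq 2\iff C_Y\geq 1$. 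So it suffices to show that for $Y\in\mathcal{C}_0$ both $C_Y\geq 1$ and $C_{Y_e}\geq 1$, and that equality in either can occur only when $Y$ is exponential.

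The only structural input needed is that $Y\in\mathcal{C}_0$ has a decreasing hazard rate (Proposition~\ref{prop:DHR}), hence a non-decreasing mean residual life $r(t)=\int_t^\infty\bar F(u)\,du/\bar F(t)$ --- immediate from the representation $r(t)=\int_0^\infty e^{-\int_t^{t+s}h(v)\,dv}\,ds$, since the integral of a non-increasing hazard rate over a window of fixed width decreases as the window is shifted right, making the integrand non-decreasing in $t$. Two conclusions follow. Using $\int_0^\infty r(t)\bar F(t)\,dt=\int_0^\infty u\bar F(u)\,du=m_2/2$ together with $r(t)\geq r(0)=m_1$ gives $m_2/2\geq m_1\int_0^\infty\bar F(t)\,dt=m_1^2$, i.e.\ $C_Y\geq 1$, with equality forcing $r\equiv m_1$ and hence $Y$ exponential. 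And the hazard rate of $Y_e$ is $\bar F(t)/\int_t^\infty\bar F(u)\,du=1/r(t)$, which is non-increasing, so $Y_e$ is itself a decreasing-hazard-rate distribution; the same argument applied to $Y_e$ gives $C_{Y_e}\geq 1$, with equality iff $Y_e$ is exponential, i.e.\ iff $\bar F$ is exponential, i.e.\ iff $Y$ is exponential.

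Combining: if $Y\in\mathcal{C}_0$ is exponential then $(n_2,n_3)=(2,3)\in\mathcal{A}_{n_2,n_3}^{(2)}$; otherwise $C_Y>1$ and $C_{Y_e}>1$, that is $n_2>2$ and $n_3>\tfrac32 n_2$, so again $(n_2,n_3)\in\mathcal{A}_{n_2,n_3}^{(2)}$. This gives $\mathcal{A}_{n_2,n_3}^{(n)}\subseteq\mathcal{A}_{n_2,n_3}^{(2)}$ and hence equality. I do not expect any deep obstacle; the delicate point is the bookkeeping of the equality cases, since $\mathcal{A}_{n_2,n_3}^{(2)}$ is the \emph{open} region $n_3>\tfrac32 n_2$ plus the single point $(2,3)$, so the strict inequalities must be established for every non-exponential $Y\in\mathcal{C}_0$, not merely the non-strict ones. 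An alternative route, which I would keep in reserve if a more self-contained argument is wanted, avoids $Y_e$ altogether: expand $m_2$ and $m_3$ as functions of the Coxian parameters $(\mu_i,p_i)$ via the standard phase-type moment recursions and prove $2m_1m_3\geq 3m_2^2$ by backward induction on the phase index, using that $\mu_i(1-p_i)$ is decreasing together with Lemma~\ref{applem:Ri}.
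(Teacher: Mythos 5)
Your proof is correct, but it takes a genuinely different route from the paper. The paper argues by induction on the number of phases, following the algebraic scheme of Theorem 3.1 in the Bobbio--Horv\'ath--R\'acz--Telek paper: it peels off the first Coxian phase, writes the normalized third moment $\hat n_3$ of the $n$-phase distribution explicitly in terms of the normalized moments of the remaining $(n-1)$-phase tail and the parameter $g = 1+\mu_1 m_1 p_1$, and then uses monotonicity of that expression in $g$ together with $\hat n_2 \geq 2$ to push the bound down to $3\hat n_2/2$ in the limit $g\to\infty$. You instead bypass the Coxian structure entirely and derive everything from Proposition~\ref{prop:DHR}: the identities $n_2 \geq 2 \iff C_Y \geq 1$ and $n_3 \geq \tfrac32 n_2 \iff C_{Y_e}\geq 1$ reduce the claim to showing that a DHR distribution and its stationary-excess distribution both have squared coefficient of variation at least one, which follows from the non-decreasing mean residual life and the observation that the hazard rate of $Y_e$ is $1/r(t)$. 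Your route buys a strictly stronger statement --- the moment region $\{n_2>2,\,n_3>\tfrac32 n_2\}\cup\{(2,3)\}$ constrains \emph{every} DHR distribution, not just members of $\mathcal{C}_0$ --- and it handles the boundary more transparently: since $\mathcal{A}^{(2)}_{n_2,n_3}$ is an open region plus one point, the strict inequalities matter, and your equality analysis ($C_Y=1$ or $C_{Y_e}=1$ forces exponentiality) pins them down explicitly, whereas the paper's limit argument leaves the strictness of $\hat n_3 > 3\hat n_2/2$ slightly implicit (it follows there because $g$ is finite and the bounding function is strictly decreasing). What the paper's approach buys in exchange is a computation entirely internal to the phase-type moment machinery already used for the $n=2$ case, with no need to introduce the equilibrium distribution.
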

\begin{proof}
We use induction on $n$ and note that the result clearly holds for $n=2$. The proof follows the same
line of reasoning as the proof of Theorem 3.1 in \cite{bobbio4}. 
Let  $\hat n_2$ and $\hat n_3$ be the normalized moments of a distribution 
in  $\mathcal{C}_0$ represented by $(\alpha,S)$. Denote the matrix $S$ as
\[ S = \begin{bmatrix}
-\mu_1 & p_1 \mu_1 \alpha_{n-1} \\ 0 & A
\end{bmatrix},\]
where $\alpha_{n-1}$ is the first row of the size $n-1$ identity matrix.
Note that $(\alpha_{n-1},A)$ is  a phase type representation of a distribution
with $n-1$ phases in $\mathcal{C}_0$. Let $n_2$ and $n_3$ be the normalized moments of $(\alpha_{n-1},A)$
and $m_1$ be its mean.
By induction we know $n_3 > 3n_2/2$ for $n_2 > 2$.
Using exactly the same arguments as in the proof of Theorem 3.1 in \cite{bobbio4}, we find
\[ \hat n_3 = \frac{3}{g}+\frac{(\hat n_2 g-2)^2}{(g-1)g\hat n_2} \frac{n_3}{n_2},\]
 with $g=1+\mu_1 m_1 p_1 \geq 1$. By induction we know $n_3/n_2 \geq 3/2$, meaning
\begin{align}\label{eq:n3bound}
 \hat n_3 \geq \frac{3}{g}+\frac{(\hat n_2 g-2)^2}{(g-1)g\hat n_2} \frac{3}{2}.
\end{align}
Further in  the proof of Theorem 3.1 in \cite{bobbio4} it is shown that
 $(\hat n_2 g-2)^2/((g-1)g\hat n_2)$ is decreasing in $g$ on $(1,\infty)$ 
whenever $\hat n_2 \geq (n+4)/(n+1)$.
Since $\hat n_2 \geq 2$ as any distribution in $\mathcal{C}_0$ has a 
decreasing hazard rate, we obtain a lower
bound for $\hat n_3$ by taking the limit of $g$ to infinity in \eqref{eq:n3bound}. 
This limit clearly equals $3 \hat n_2/2$, which completes the proof. 
\end{proof}

\section{The form of the ODE}\label{sec:model}
ODE-based mean field models of systems with exponential job sizes (with mean $1$) are often of the 
following form
(see Section \ref{sec:examples} for examples):
\begin{align}\label{eq:ODEexpo}
\frac{d}{dt}h_{\ell,1}(t) &=  f_{\ell,1}(h(t)) - (h_{\ell,1}(t)-h_{\ell+1,1}(t)), 
\end{align}
where $h_{\ell,1}(t)$ represents the fraction of the servers with at least $\ell$ jobs
and $f_{\ell,1}(h(t))$ captures events such as job arrivals and job transfers (see Section \ref{sec:Pullex}).
The term $-(h_{\ell,1}(t)-h_{\ell+1,1}(t))$ reflects the drift due to the exponential
service completions.
 The assumption that the mean job size equals $1$ is made throughout the paper
(without loss of generality). 

We now generalize this set of ODEs to the case where the job sizes belong to class $\mathcal{C}_0$
given that the service discipline is first-come-first-served (FCFS).
Define $h_{\ell,i}(t)$, for $\ell > 0$ and $i=1,\ldots,n$, as the fraction of the queues at time $t$
with a queue length of at least $\ell$ in service phase $j \geq i$.
Thus, $(h_{\ell,i}(t)-h_{\ell,i+1}(t))$ is the fraction of queues at time $t$
with $\ell$ or more jobs that are in service phase $i$. 
For ease of notation let $h_{\ell,n+1}(t)=0$ and $h_{0,1}(t) = 1$.
Note that a service completion in a queue with a length of at least $\ell$ always
decreases $h_{\ell,i}(t)$ for $i\geq 2$ as the next customer starts service in phase $1$, whereas $h_{\ell,1}(t)$
only decreases if the queue length is exactly $\ell$.
Hence, the set of ODEs given by \eqref{eq:ODEexpo} then generalizes to:
\begin{align}\label{eq:newODE1}
\frac{d}{dt}&h_{\ell,1}(t) =  f_{\ell,1}(h(t)) \nonumber \\
& -\sum_{j=1}^n \left[(h_{\ell,j}(t)-h_{\ell,j+1}(t))-(h_{\ell+1,j}(t)-h_{\ell+1,j+1}(t))\right] 
\mu_j (1-p_j) \\
\frac{d}{dt}&h_{\ell,i}(t) = 1[\ell > 1] f_{\ell,i}(h(t)) + (h_{\ell,i-1}(t)-h_{\ell,i}(t))p_{i-1}\mu_{i-1}
\nonumber \\ &-  \sum_{j=i}^n (h_{\ell,j}(t)-h_{\ell,j+1}(t)) \mu_j (1-p_j) \label{eq:newODE2}
\end{align}
for $\ell \geq 1$ and $i=2,\ldots,n$, where the sums are due to service completions and the second term in the drift of $h_{\ell,i}(t)$
corresponds to phase changes. 

We remark that we can also model systems with a finite buffer of size $B$ by
setting $f_{\ell,i}(h) = 0$, for $i=1,\ldots,n$ and $\ell > B$, as this implies
that $h_{\ell,i}(t) = 0$ for $i=1,\ldots,n$ and $\ell > B$.

\section{Examples}\label{sec:examples}
\subsection{JSQ(d): Join-the-Shortest-Queue among d randomly selected servers}

Let us first consider the classic power-of-d choices load balancer \cite{mitzenmacher2,vvedenskaya3}, where jobs
arrive at rate $\lambda N$ to a dispatcher who  immediately assigns incoming jobs among the $N$ servers by
routing the job to the server with the least number of jobs among $d$ randomly selected servers.
In this case 
the function $f$ reflects the changes due to arrivals and one finds for $\ell \geq 1$
\begin{align*} 
f_{\ell,1}(h(t)) &= \lambda (h_{\ell-1,1}(t)^d-h_{\ell,1}(t)^d), 
\end{align*}
as $h_{\ell-1,1}(t)^d-h_{\ell,1}(t)^d$ is the probability that the server with the least number of jobs among
$d$ randomly selected servers has queue length $\ell-1$. Further, 
since the dispatcher does not take the service phase into account when dispatching jobs and
$(h_{\ell-1,i}(t)-h_{\ell,i}(t))/(h_{\ell-1,1}(t)-h_{\ell,1}(t))$ is the probability that a server of length $\ell-1$
is in service phase $j \geq i$, we have
\begin{align*}
f_{\ell,i}(h(t)) &=  f_{\ell,1}(h(t)) \frac{h_{\ell-1,i}(t)-h_{\ell,i}(t)}{h_{\ell-1,1}(t)-h_{\ell,1}(t)}\\
&=  \lambda \left( \sum_{j=0}^{d-1} h_{\ell-1,1}(t)^j h_{\ell,1}(t)^{d-1-j} \right) (h_{\ell-1,i}(t)-h_{\ell,i}(t)), 
\end{align*}
for $\ell > 1$ and $i=2,\ldots,n$, as $(a^d-b^d)/(a-b) = \sum_{j=0}^{d-1} a^j b^{d-1-j}$.
For convenience we also define $f_{1,i}(h(t))=0$ for $i=2,\ldots,n$.


\subsection{Pull and push strategies}\label{sec:Pullex}

In this example we consider the system analyzed in \cite{minnebo2}. It consists of $N$ servers that
each have local job arrivals with rate $\lambda$. Servers that are idle generate probe messages at
rate $r$. A probe message is sent to a random server and if this server has pending jobs, a job
is transferred to the idle server.  
The function $f$ now captures the changes due to arrivals as well as job transfers, hence
\begin{align*} 
f_{\ell,1}&(h(t)) = \lambda (h_{\ell-1,1}(t)-h_{\ell,1}(t)) \nonumber \\
& + r(1-h_{1,1}(t)) [1[\ell = 1] h_{2,1}(t)   - 1[\ell > 1](h_{\ell,1}(t)-h_{\ell+1,1}(t)) ],
\end{align*}
for $\ell \geq 1$ and
\begin{align*}
f_{\ell,i}(h(t)) =   \lambda (h_{\ell-1,i}(t)&-h_{\ell,i}(t)) \\&- 
r(1-h_{1,1}(t)) (h_{\ell,i}(t)-h_{\ell+1,i}(t)),
\end{align*}
for $\ell > 1$ and $i=2,\ldots,n$.
Note that $r(1-h_{1,1}(t)) (h_{\ell,i}(t)-h_{\ell+1,i}(t))$ is the rate at which jobs are
transferred from a server with length $\ell$ in phase $j \geq i$ to an idle server.
Therefore $r(1-h_{1,1}(t)) h_{2,1}(t)$ is the rate at which idle servers become busy due to the
probe messages.

\subsection{JSQ(K,d): Join-the-Shortest-K-Queues among d randomly selected servers} 
This example is a generalization of the first example. Jobs now arrive in batches of size $K$ 
and the dispatcher assigns the $K$ jobs (with independent sizes) belonging to the same batch to the $K$ servers with 
the least number of jobs among $d$ randomly selected servers (with $K \leq d$). This load balancing scheme is called {\it batch sampling} 
in \cite{ying_batches}. The mean field model in \cite{ying_batches} is however different than the one presented
here, as we assume that both $K$ and $d$ are fixed, i.e., do not grow as a function of $N$. 

In this case $\lambda < 1/K$ in order to have
a stable system (as the mean service time of a job equals $1$) and 
the function $f$ once more reflects the changes due to arrivals. Note that 
\begin{align*} 
p_{k,\ell}(h(t)) = \sum_{s=0}^{k-1} {d \choose s} (1-h_{\ell,1}(t))^s h_{\ell,1}(t)^{d-s},
\end{align*}
is the probability that the $k$-th shortest queue has a length of at least $\ell$. As such
\begin{align*} 
f_{\ell,1}& (h(t)) = \lambda \sum_{k=1}^K (p_{k,\ell-1}(h(t))-p_{k,\ell}(h(t))) = \lambda \sum_{s=0}^{K-1} (K-s) {d \choose s} \\
& \cdot \left( (1-h_{\ell-1,1}(t))^s h_{\ell-1,1}(t)^{d-s}- (1-h_{\ell,1}(t))^s h_{\ell,1}(t)^{d-s} \right),
\end{align*}
for $\ell \geq 1$ and
\begin{align*} 
f_{\ell,i}(h(t)) &=  f_{\ell,1}(h(t)) \frac{h_{\ell-1,i}(t)-h_{\ell,i}(t)}{h_{\ell-1,1}(t)-h_{\ell,1}(t)}.
\end{align*}
for $\ell > 1$. In addition we define $f_{1,i}(h(t))=0$ for $i=2,\ldots,n$.

Note that in the special case where $K = d$, one finds that $f_{\ell,1}(h(t))$ simplifies to 
$\lambda K (h_{\ell-1,1}(t)-h_{\ell,1}(t))$. Thus, when $K=d$ the set of ODEs describes the
transient evolution of an $M/Cox/1$ queue with arrival rate $\lambda K$.   

\section{State space and partial order}\label{sec:order}

In the case of exponential job sizes the state space is typically defined as 
\[\Omega_{expo} = \{(h_{\ell,1})_{\ell>0} | 0 \leq h_{\ell,1} \leq 1, h_{\ell+1,1} \leq h_{\ell,1}, \sum_{\ell} h_{\ell,1} < \infty \}, \]
where $h_{\ell,1}$ represents the fraction of queues with length $\ell$ or more. 
The partial order used to prove
global attraction on $\Omega_{expo}$ in case of exponential job sizes is the componentwise order. In this
section we introduce the state space and partial order needed in case of a job size distribution belonging to class
$\mathcal{C}_0$.

We define the state space $\Omega$ of the mean field model in terms of the variables $h_{\ell,i}$ as follows
\begin{align*}
\Omega =\{(h_{\ell,i})_{\ell>0,i\in \{1,\ldots,n\}}| 0 \leq h_{\ell,i} \leq 1, h_{\ell,i+1} \leq h_{\ell,i}, h_{\ell+1,i} \leq h_{\ell,i}, \\
h_{\ell,i}+h_{\ell+1,i+1} \geq h_{\ell+1,i}+h_{\ell,i+1}, \sum_{\ell} h_{\ell,1} < \infty \}.
\end{align*} 
The conditions $ h_{\ell,i+1} \leq h_{\ell,i}$ and $h_{\ell+1,i} \leq h_{\ell,i}$ are obvious as $h_{\ell,i}$ is the fraction
of servers with at least $\ell$ jobs in service phase $j \geq i$. The inequality $h_{\ell,i}+h_{\ell+1,i+1} \geq h_{\ell+1,i}+h_{\ell,i+1}$
may seem a bit unexpected. This inequality can be understood by noting that $\Omega$ corresponds to 
\begin{align*}\bar \Omega =\{(x_0,(x_{\ell,i})_{\ell>0,i\in \{1,\ldots,n\}})&|
 x_0 \geq 0, x_{\ell,i} \geq 0, \\ 
& x_0+\sum_{\ell,i} x_{\ell,i}=1, \sum_{\ell,i} \ell x_{\ell,i} < \infty \}.
\end{align*} 
after a change of variables (i.e., $h_{\ell,i} = \sum_{\ell'\geq \ell} \sum_{i'\geq i} x_{\ell',i'}$),
where $x_{\ell,i}$ is the fraction of servers with exactly $\ell$ jobs in service phase $i$.
Therefore the inequality $h_{\ell,i}+h_{\ell+1,i+1} \geq h_{\ell+1,i}+h_{\ell,i+1}$ follows  from
the fact that $(h_{\ell,i}-h_{\ell,i+1})-(h_{\ell+1,i}-h_{\ell+1,i+1})=x_{\ell,i} \geq 0$.

In the case of a system with a finite buffer of size $B$ the state space reduces to
\begin{align*}
\Omega_B =\{(h_{\ell,i})_{\ell \in \{1,\ldots,B\},i\in \{1,\ldots,n\}}| 0 \leq h_{\ell,i} \leq 1,  h_{\ell,i+1} \leq h_{\ell,i}, \\
 h_{\ell+1,i} \leq h_{\ell,i}, h_{\ell,i}+h_{\ell+1,i+1} \geq h_{\ell+1,i}+h_{\ell,i+1} \}.
\end{align*} 
Whenever the buffer size is finite, we can replace $\Omega$ in all subsequent statements by $\Omega_B$.

\begin{prop}\label{lem:anypi}
For any fixed point $\pi \in \Omega$ of the set of ODEs  given by (\ref{eq:newODE1}-\ref{eq:newODE2}), we
have $\pi_{1,i} = \pi_{1,1} \sum_{j=i}^n \frac{1}{\mu_j}\prod_{s=1}^{j-1} p_s$, for $i=1,\ldots,n$,
where $\mu_i$ and $p_s$ are the parameters of the Coxian representation.
\end{prop}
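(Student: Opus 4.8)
The plan is to use only the stationary version of \eqref{eq:newODE2} at queue length $\ell=1$. Because of the indicator $1[\ell>1]$, these equations carry no load-balancing term $f$, so for $i=2,\ldots,n$ a fixed point $\pi$ must satisfy
\[(\pi_{1,i-1}-\pi_{1,i})\,p_{i-1}\mu_{i-1} \;=\; \sum_{j=i}^{n}(\pi_{1,j}-\pi_{1,j+1})\,\mu_j(1-p_j),\]
which is simply the balance between the rate of entering service phase $i$ (coming from phase $i-1$) and the rate of leaving a phase $\ge i$ through a service completion in a queue of length exactly one. It is convenient to write $a_j := \pi_{1,j}-\pi_{1,j+1}\ge 0$, with the convention $\pi_{1,n+1}=0$, so that $\pi_{1,i}=\sum_{j=i}^n a_j$ and the balance equation becomes $a_{i-1}p_{i-1}\mu_{i-1}=\sum_{j=i}^n a_j\mu_j(1-p_j)$.

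Next I would collapse the sum on the right. Subtracting the equation for index $i+1$ from the one for index $i$ (valid for $2\le i\le n-1$) cancels all but the $j=i$ term, giving $a_{i-1}p_{i-1}\mu_{i-1}-a_ip_i\mu_i=a_i\mu_i(1-p_i)$, i.e.\ $a_i\mu_i=a_{i-1}p_{i-1}\mu_{i-1}$; and the top index $i=n$ gives the same identity directly, since there the sum has a single term and $p_n=0$. Hence $a_i\mu_i=a_{i-1}p_{i-1}\mu_{i-1}$ for all $i=2,\ldots,n$, and unrolling this first-order recursion yields $a_i\mu_i=a_1\mu_1\prod_{s=1}^{i-1}p_s$, so that
\[\pi_{1,i}=\sum_{j=i}^{n}a_j = a_1\mu_1\sum_{j=i}^{n}\frac{1}{\mu_j}\prod_{s=1}^{j-1}p_s .\]

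Finally I would identify the constant $a_1\mu_1$ by setting $i=1$: then $\pi_{1,1}=a_1\mu_1\sum_{j=1}^n\frac{1}{\mu_j}\prod_{s=1}^{j-1}p_s=a_1\mu_1 R_1=a_1\mu_1$, because $\sum_{j=1}^n\frac{1}{\mu_j}\prod_{s=1}^{j-1}p_s$ is exactly the mean of the Coxian job size $R_1$, which equals $1$ by the normalization (cf.\ the discussion around \eqref{eq:Ris}). Substituting $a_1\mu_1=\pi_{1,1}$ into the displayed formula gives the claim.

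I expect the only delicate point to be the telescoping bookkeeping in the middle step: making sure the difference argument is legitimate exactly for $2\le i\le n-1$ and that the boundary index $i=n$ is treated separately using $p_n=0$. Everything else is a routine unrolling of a one-step recursion and a final normalization. Note that membership of $\pi$ in $\Omega$ is not actually needed here beyond the conventions $\pi_{1,n+1}=0$ (and, harmlessly, $a_j\ge 0$); the identity is purely an algebraic consequence of the $\ell=1$, $i\ge 2$ fixed-point equations, and in particular it does not depend on the choice of load-balancing scheme.
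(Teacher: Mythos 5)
Your proof is correct and follows essentially the same route as the paper: both difference the $\ell=1$, $i\geq 2$ stationary equations to obtain the one-step recursion $(\pi_{1,i}-\pi_{1,i+1})\mu_i=(\pi_{1,i-1}-\pi_{1,i})p_{i-1}\mu_{i-1}$ and then normalize using the unit mean of the Coxian distribution. The paper merely packages the final step by identifying the vector of differences with the invariant vector $\beta$ of $S+(-Se)\alpha$ (with $\beta e=1$), which is the same recursion unrolled and the same normalization $R_1=1$ that you use.
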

\begin{proof}
See Appendix \ref{app:lemanypi}.
\end{proof}

We introduce the following partial order on $\Omega$ which reduces to the usual componentwise order
in case of exponential job sizes (i.e., when $n=1$). 
\begin{defin}[partial order $\leq_C$]
Let $h, \tilde h \in \Omega$. We state that $h \leq_{C} \tilde h$ if and only if 
\begin{align}\label{eq:order1}
 h_{\ell,i} &\leq \tilde h_{\ell,i}, 
\end{align}
for all $\ell,i$, and
\begin{align}\label{eq:order2}
h_{\ell_1,1}+\sum_{i=2}^n (h_{\ell_i,i}-h_{\ell_{i-1},i}) 
&\leq \tilde h_{\ell_1,1}+\sum_{i=2}^n (\tilde h_{\ell_i,i}-\tilde h_{\ell_{i-1},i}),
\end{align}
for any set of integers $\ell_1 \geq \ell_2 \geq \ldots \geq \ell_n \geq 1$ with $\ell_1 > \ell_n$.
\end{defin}
It is useful to note that $h_{\ell_1,1}+\sum_{i=2}^n (h_{\ell_i,i}-h_{\ell_{i-1},i})$
is the fraction of the servers for which the queue length is at least $\ell_i$ and the
service phase equals $i$ for some $i \in \{1,\ldots,n\}$.

Without condition \eqref{eq:order2} the order would correspond to the usual componentwise partial order.
To illustrate the need for a stronger partial order, let $n=2$ and consider $h, \tilde h \in \Omega$ with
$h_{1,1}=\tilde h_{1,1}=1$, $h_{1,2}=\tilde h_{1,2}=1/2$, $h_{2,1}=\tilde h_{2,1}=1/2$,
$h_{2,2}=0$, $\tilde h_{2,2}=1/2$ and $h_{3,1}= \tilde h_{3,1}=0$. Thus, in both states
half of the servers have queue length one and the other half has queue length 2.
In state $h$ the servers with length $1$ are in service phase $2$ and the servers with length
$2$ are in phase $1$, while in state $\tilde h$ the phases are reversed (queues with length $i$
are in phase $i$, for $i=1,2$). Note that $h$ is smaller than $\tilde h$ in the componentwise
order, but condition \eqref{eq:order2} is violated with $\ell_1 = 2$ and $\ell_2=1$, meaning
$h \not\leq_C \tilde h$. If we now look at the drift of the number of busy servers due to
service completions, we see that it equals $-\mu_2/2$ in state $h$ and $-\mu_1(1-p_1)/2$ in
state $\tilde h$. Hence, $h_{1,1}=\tilde h_{1,1} = 1$, but $h_{1,1}$ decreases more slowly
than $\tilde h_{1,1}$ (when $\mu_2 < \mu_1 (1-p_1)$). This example therefore shows that the
componentwise partial order used for the set of ODEs with exponential job sizes, is {\it not}
preserved over time by the set of ODEs with a job size distribution in $\mathcal{C}_0$ and we need to replace it by a stronger
partial order, which turns out to be the order $\leq_C$ defined above. 

We end by noting that due to the condition $h_{\ell,i}-h_{\ell+1,i} \geq h_{\ell,i+1}-h_{\ell+1,i+1}$ in $\Omega$, we
have $h_{\ell,1}-h_{\ell+1,1} \geq h_{\ell,i}-h_{\ell+1,i}$ for any $i=2,\ldots,n$ and therefore
\begin{align}\label{eq:helpbound}
 h_{\ell_1,1}+&\sum_{i=2}^n (h_{\ell_i,i}-h_{\ell_{i-1},i}) =  h_{\ell_1,1}+\sum_{i=2}^n \sum_{\ell = \ell_i}^{\ell_{i-1}-1}
(h_{\ell,i}-h_{\ell+1,i}) \nonumber \\
&\leq h_{\ell_1,1}+\sum_{i=2}^n \sum_{\ell = \ell_i}^{\ell_{i-1}-1}
(h_{\ell,1}-h_{\ell+1,1}) \nonumber \\ &= h_{\ell_1,1}+\sum_{i=2}^n (h_{\ell_i,1}-h_{\ell_{i-1},1})= h_{\ell_n,1},
\end{align}
for any $h \in \Omega$.

\section{Global attraction}\label{sec:global}

We now list the assumptions needed to establish the main result. Note that some of the intermediate results do not require
all of the assumptions.

\begin{assump}\label{ass:unique}
The functions $f_{\ell,i}(h):\Omega \rightarrow \mathbb{R}$ are such that for any $h \in \Omega$,  
the set of ODEs given by (\ref{eq:newODE1}-\ref{eq:newODE2}) has a unique solution $h(t):[0,\infty) \rightarrow
\mathbb{R}$ with $h(0)=h$ and
there exists a fixed point $\pi$ in $\Omega$.
\end{assump}

The existence of a unique (continuously differentiable) solution $h(t)$ is guaranteed by defining a norm
on $\mathbb{R}^{\mathbb{N}}$ such that the drift is locally Lipschitz and bounded on $\Omega$.
When the buffer size $B < \infty$, the existence of a fixed point follows
almost immediately from Brouwer's fixed point theorem as $\Omega_B$ is a convex and compact subset of $\mathbb{R}^{Bn}$
and $\Omega_B$ is clearly a forward invariant set \cite{bhatia2002stability}.

The next two assumptions are needed to establish that the partial order $\leq_C$ is preserved over time by
the set of ODEs.

\begin{assump}\label{ass:f}
The functions $f_{\ell,i}(h):\Omega \rightarrow \mathbb{R}$ are non-decreasing in $h_{\ell',i'}$ for
any $(\ell',i')\not=(\ell,i)$. 
\end{assump}

For any set of integers $\ell_1 \geq \ell_2 \geq \ldots \geq \ell_n \geq 1$ with $\ell_1 > \ell_n$,
define $g_{(\ell_1,\ldots,\ell_n)}$ and $f_{(\ell_1,\ldots,\ell_n)}$ as a function from $\Omega$ to $\mathbb{R}$ such that 
\begin{align}
g_{(\ell_1,\ldots,\ell_n)}(h) &= h_{\ell_1,1}+\sum_{i=2}^n (h_{\ell_i,i}-h_{\ell_{i-1},i}), 
\end{align} 
and
\begin{align}
f_{(\ell_1,\ldots,\ell_n)}(h) &= f_{\ell_1,1}(h)+\sum_{i=2}^n (f_{\ell_i,i}(h)-f_{\ell_{i-1},i}(h)).
\end{align} 
Due to \eqref{eq:order2}, $h \leq_C \tilde h$ implies that $g_{(\ell_1,\ldots,\ell_n)}(h) \leq
g_{(\ell_1,\ldots,\ell_n)}(\tilde h)$.

\begin{assump}\label{ass:fl}
The functions $f_{(\ell_1,\ldots,\ell_n)}(h):\Omega \rightarrow \mathbb{R}$ are such that
 \[f_{(\ell_1,\ldots,\ell_n)}(h) \leq f_{(\ell_1,\ldots,\ell_n)}(\tilde  h)\] 
for any $h, \tilde h \in \Omega$
such that $h \leq_C \tilde h$ and $g_{(\ell_1,\ldots,\ell_n)}(\tilde h) = g_{(\ell_1,\ldots,\ell_n)}(h)$.
\end{assump}


The next assumption is used to prove that for any $h \leq_C \pi$ or $\pi \leq_C h$ the 
trajectory starting in $h$ of the set of ODEs converges to the fixed point $\pi$.

\begin{assump}\label{ass:sumfL1}
The functions $f_{\ell,1}(h)$ are such that for any fixed point $\pi$
and $L \geq 1$ we have
\begin{align*}
\sum_{\ell \geq L} (f_{\ell,1}(h)-f_{\ell,1}(\pi)) = 
\sum_{\ell =1}^{L-1} \sum_{j=1}^n b_{L,\ell,j}(h) (h_{\ell,j}-\pi_{\ell,j})
- a_{L,\pi}(h), 
\end{align*}
for some bounded functions $b_{L,\ell,j}(h)$ on $\Omega$ and functions $a_{L,\pi}(h)$ 
for which $a_{L,\pi}(h) \geq 0$ if $\pi \leq_C h$ and
$a_{L,\pi}(h) \leq 0$ if $h \leq_C \pi$.     
\end{assump}

The main theorem is stated below.

\begin{theo}[Global attraction]\label{th:global}
Consider the set of ODEs given by (\ref{eq:newODE1}-\ref{eq:newODE2}). 
Assume that (\ref{ass:unique}-\ref{ass:sumfL1}) hold and $\mu_i(1-p_i)$ is decreasing in $i$. 
Then, for any $h(0) \in \Omega$, $h(t)$ converges pointwise to the unique fixed point $\pi \in \Omega$ as $t$ tends to infinity.
\end{theo}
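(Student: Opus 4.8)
The strategy is the classical three-step monotonicity argument, now executed with respect to the partial order $\leq_C$ instead of the componentwise order. First I would establish the \emph{monotonicity} (order-preservation) of the flow: if $h \leq_C \tilde h$ then $h(t) \leq_C \tilde h(t)$ for all $t \geq 0$. This has two parts, mirroring the two defining inequalities of $\leq_C$. For \eqref{eq:order1} (the componentwise part) one uses a standard Kamke-type argument: at a first time and index where $h_{\ell,i}(t) = \tilde h_{\ell,i}(t)$, compare the right-hand sides of \eqref{eq:newODE1}--\eqref{eq:newODE2}; Assumption~\ref{ass:f} handles the $f$-terms, and the specific structure of the service/phase-change drift (each job starts in phase $1$, phases increase by one, rates $\mu_j(1-p_j)$ decreasing) must make the remaining drift terms point inward. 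For \eqref{eq:order2} one applies the same Kamke argument to the aggregated quantities $g_{(\ell_1,\ldots,\ell_n)}(h(t))$: their time derivative is $f_{(\ell_1,\ldots,\ell_n)}(h(t))$ plus the aggregated service/phase drift; at a time when $g_{(\ell_1,\ldots,\ell_n)}(h(t)) = g_{(\ell_1,\ldots,\ell_n)}(\tilde h(t))$ (but still $h(t)\leq_C\tilde h(t)$), Assumption~\ref{ass:fl} controls the $f$-part and one needs the drift part of $g$ to be monotone in a compatible sense. The telescoping identity \eqref{eq:helpbound} and Lemma~\ref{applem:Ri}/$\mu_j(1-p_j)$ decreasing are the tools that should make the drift terms cooperate — this verification is, I expect, the technical heart of the monotonicity step.

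\emph{Step two: sandwiching.} For a given $h \in \Omega$ and the (unique, by Assumption~\ref{ass:unique} and Proposition~\ref{lem:anypi}) fixed point $\pi$, I would exhibit $h^{(l)}, h^{(u)} \in \Omega$ with $h^{(l)} \leq_C h$, $h^{(l)} \leq_C \pi$, and $\pi \leq_C h^{(u)}$, $h \leq_C h^{(u)}$. A natural candidate for $h^{(u)}$ is a "fully loaded" state (all servers long, with the phase profile dictated by Proposition~\ref{lem:anypi}) and for $h^{(l)}$ the empty state; one must check these genuinely dominate/are dominated in the strong order $\leq_C$, not merely componentwise, which is where the $h_{\ell,i}+h_{\ell+1,i+1}\geq h_{\ell+1,i}+h_{\ell,i+1}$ constraint and the explicit form of the aggregates $g$ come in. By monotonicity, $h^{(l)}(t) \leq_C h(t) \leq_C h^{(u)}(t)$ for all $t$, so it suffices to prove that $h^{(l)}(t) \to \pi$ and $h^{(u)}(t) \to \pi$, i.e.\ to prove attraction from initial conditions comparable to $\pi$.

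\emph{Step three: attraction from comparable data.} Assume $\pi \leq_C h = h(0)$ (the case $h \leq_C \pi$ is symmetric). Monotonicity gives that $\pi = \pi(t) \leq_C h(t)$ for all $t$, and one expects $t \mapsto h(t)$ to be monotone decreasing (in $\leq_C$) along the flow, hence convergent pointwise to some limit $h^\infty \geq_C \pi$ which, being a limit of the flow, is itself a fixed point; uniqueness of the fixed point then forces $h^\infty = \pi$. The monotonicity-in-$t$ and the identification of the limit are where Assumption~\ref{ass:sumfL1} enters: summing the ODEs for $h_{\ell,1}$ over $\ell \geq L$ and using the prescribed decomposition, the term $a_{L,\pi}(h)$ has the correct sign to drive $\sum_{\ell\geq L}(h_{\ell,1}(t)-\pi_{\ell,1})$ downward, while the bounded $b_{L,\ell,j}$-terms and a Grönwall/telescoping argument across levels $L$ propagate convergence of the low levels to all levels and all phases. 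Controlling the infinitely many coordinates — ensuring pointwise convergence genuinely holds rather than just along subsequences, and that the decreasing $\leq_C$-trajectory has a limit \emph{in} $\Omega$ — is the main obstacle in this last step; the bounded-$b$ hypothesis and the summability built into $\Omega$ are presumably exactly what is needed to close it.
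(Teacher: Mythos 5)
Your steps one and two track the paper's Propositions \ref{prop:mono} and \ref{prop:sufficient} closely, and the Kamke-type verification you sketch for both \eqref{eq:order1} and the aggregated quantities $g_{(\ell_1,\ldots,\ell_n)}$ is exactly what the paper does (one caveat on step two: your ``fully loaded'' candidate for $h^{(u)}$ does not lie in $\Omega$ when the buffer is infinite, because of the constraint $\sum_\ell h_{\ell,1}<\infty$; the paper instead takes $h^{(u)}_{\ell,i}=\max(h_{\ell,1},\pi_{\ell,1})$ for all $i$, which is summable and is shown to dominate both $h$ and $\pi$ in $\leq_C$ via \eqref{eq:helpbound}).

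The genuine gap is in step three. You assert that for $\pi \leq_C h(0)$ the trajectory $t\mapsto h(t)$ is monotone decreasing in $\leq_C$, so that it converges to a limit that must be a fixed point. That time-monotonicity does not follow from $\pi\leq_C h(0)$ and order preservation: it would require $h(\delta)\leq_C h(0)$ for small $\delta$, i.e.\ that the drift at $h(0)$ points downward in the order, which holds only for specially chosen initial states (and, as noted, the natural maximal state is not in $\Omega$). Moreover, uniqueness of the fixed point cannot be used to identify the limit at this stage, since in the paper uniqueness is a \emph{consequence} of global attraction, not a hypothesis. The paper's actual argument (Proposition \ref{prop:attract}) avoids time-monotonicity entirely: it introduces the Lyapunov-like functionals $z_{1,L}(h)=\sum_{\ell\geq L}h_{\ell,1}$ and, crucially, $z_2(h)=\sum_{i=2}^n h_{1,i}(R_i-R_{i-1})$ built from the expected remaining service times $R_i$ (positive by Lemma \ref{applem:Ri}). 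The correction term $z_2$ is what converts the phase-dependent completion rate $\sum_j(h_{1,j}-h_{1,j+1})\mu_j(1-p_j)$ into the plain term $h_{1,1}$ in $\frac{d}{dt}(z_{1,1}+z_2)$ (Lemma \ref{lem:zL1z2}); combined with the sign condition on $a_{L,\pi}$ from Assumption \ref{ass:sumfL1} this yields $\int_0^\infty(h_{1,1}(t)-\pi_{1,1})\,dt<\infty$, and since $h_{1,1}(t)\geq\pi_{1,1}$ pointwise by order preservation, convergence of $h_{1,1}(t)$ follows; the remaining coordinates are then handled by induction on $L$ using the bounded $b_{L,\ell,j}$ and the inequality $h_{L,j}+h_{L-1,j+1}-h_{L,j+1}\geq\pi_{L,j}+\pi_{L-1,j+1}-\pi_{L,j+1}$ implied by \eqref{eq:order2}. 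Without the $z_2$ construction and this integral argument, your step three does not close.
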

\begin{proof}
In Section \ref{sec:proof} we show that any fixed point $\pi \in \Omega$ 
is a global attractor, which implies that
the fixed point is unique.
\end{proof}

\begin{corol}\label{cor:Kurtz}
Consider a density dependent population process as defined by
Kurtz \cite{kurtz1} on $\Omega_B$ such that its drift given
by the right hand side of (\ref{eq:newODE1}-\ref{eq:newODE2})
is Lipschitz continuous. 
Let $X^{(N)}_\infty$ be the stationary measure of the $N$-th population
process and assume (\ref{ass:unique}-\ref{ass:sumfL1}) hold, 
then $X^{(N)}_\infty$ converges weakly to the Dirac
measure on $\pi$.
\end{corol}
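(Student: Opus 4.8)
The plan is to run the classical three-step \emph{indirect} argument sketched in the introduction, now that Theorem~\ref{th:global} supplies the missing ingredient (global attraction of the fixed point). Fix $N$: since the buffer size $B$ and the number of phases $n$ are finite, the $N$-th density dependent population process is a continuous-time Markov chain on a finite state space, so it admits a stationary measure, and $X^{(N)}_\infty$ is viewed as a random element of the compact set $\Omega_B \subset \mathbb{R}^{Bn}$. Because $\Omega_B$ is compact, the family $\{X^{(N)}_\infty\}_{N\geq 1}$ is automatically tight, hence relatively compact for the weak topology by Prohorov's theorem. It therefore suffices to prove that \emph{every} weak limit point $\nu$ of this family equals $\delta_\pi$.

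First I would set up the deterministic semiflow. Under Assumption~\ref{ass:unique} together with the assumed Lipschitz continuity of the drift on $\Omega_B$, the ODEs (\ref{eq:newODE1}--\ref{eq:newODE2}) generate a semiflow $\Phi=(\Phi_t)_{t\geq 0}$ on $\Omega_B$, and standard continuous dependence on initial data gives that $h\mapsto\Phi_t(h)$ is (Lipschitz) continuous for each $t$, so the pushforward $\mu\mapsto(\Phi_t)_*\mu$ is a Feller map on probability measures. The core step is then to show that a weak limit point $\nu$ of $X^{(N_k)}_\infty$ along a subsequence is $\Phi$-invariant. Initializing the $N_k$-th process in stationarity gives $X^{(N_k)}(0)\sim X^{(N_k)}_\infty$ and hence $X^{(N_k)}(t)\sim X^{(N_k)}_\infty$ for all $t$. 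On the other hand Kurtz's theorem \cite{kurtz1,ethier1}, whose hypotheses hold because the drift is Lipschitz on the compact $\Omega_B$, gives that for (deterministic) initial conditions converging to $h$ the scaled trajectory $X^{(N_k)}(\cdot)$ converges in probability, uniformly on compact time intervals, to $\Phi_\cdot(h)$. Conditioning on the random initial state, using the Feller property from the first step and the boundedness of $\Omega_B$, this upgrades to $X^{(N_k)}(t)\Rightarrow(\Phi_t)_*\nu$ for each fixed $t$. Equating the two identifications of the limit of $X^{(N_k)}(t)$ yields $(\Phi_t)_*\nu=\nu$ for all $t\geq 0$.

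Finally I would conclude via global attraction. Let $H\sim\nu$. By Theorem~\ref{th:global} (whose hypotheses are in force), $\Phi_t(h)\to\pi$ as $t\to\infty$ for every $h\in\Omega_B$, so $\Phi_t(H)\to\pi$ almost surely; since $\Omega_B$ is compact, the bounded convergence theorem gives $\int f\,d(\Phi_t)_*\nu=\mathbb{E}[f(\Phi_t(H))]\to f(\pi)$ for every bounded continuous $f$, i.e. $(\Phi_t)_*\nu\Rightarrow\delta_\pi$. But $(\Phi_t)_*\nu=\nu$ for every $t$, hence $\nu=\delta_\pi$. As this holds for every weak limit point of the tight family $\{X^{(N)}_\infty\}$, we get $X^{(N)}_\infty\Rightarrow\delta_\pi$.

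I expect the main obstacle to be the middle step: making rigorous the interchange of the $N\to\infty$ limit with stationarity to deduce that $\nu$ is $\Phi$-invariant, since Kurtz's theorem is stated for deterministic (or convergent) initial conditions whereas here the initial state is random with law $X^{(N_k)}_\infty$. This is handled by a coupling over initial conditions combined with the continuity of $\Phi_t$ established in the first step, or equivalently via the Skorokhod representation theorem applied along the subsequence. The remaining verifications — existence of stationary measures for finite $N$, tightness, the Feller property, and the concluding bounded-convergence argument — are routine given the compactness of $\Omega_B$ and the Lipschitz drift.
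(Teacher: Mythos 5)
Your proposal is correct, and its overall skeleton (finite state space gives stationary measures, compactness of $\Omega_B$ gives tightness, every limit point is forced to be $\delta_\pi$ by global attraction, subsequence argument upgrades this to convergence of the whole sequence) matches the paper's. Where you diverge is the key middle step. The paper disposes of it with a citation: by Theorem 3.5 and Corollary 3.9 of Roth and Sandholm, any weak limit point of the stationary measures is supported on the Birkhoff center of the ODE, which is $\{\pi\}$ once $\pi$ is a global attractor. You instead reconstruct this from first principles: you use Kurtz's finite-horizon convergence theorem together with stationarity of $X^{(N_k)}(t)$ to show that a limit point $\nu$ is invariant under the semiflow $\Phi_t$, and then argue $(\Phi_t)_*\nu=\nu\Rightarrow\delta_\pi$ via global attraction and bounded convergence. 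These are two routes to essentially the same underlying fact (limit points of stationary measures are invariant measures of the limiting flow, hence concentrate on the Birkhoff center); your version is self-contained and more transparent but requires the Lipschitz drift to invoke Kurtz (which the corollary assumes anyway), and requires you to carefully handle the random initial condition, e.g.\ via a version of Kurtz's theorem uniform over initial states in the compact $\Omega_B$ or via Skorokhod representation --- you correctly flag this as the delicate point. The citation-based route buys generality (the paper notes it extends to non-Lipschitz drifts and differential inclusions) at the cost of opacity. Both proofs are valid.
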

\begin{proof}

As the $N$-th population process has a finite number of states it has
a unique stationary measure $X^{(N)}_\infty$. This sequence of measures
is tight as $\Omega_B$ is compact, thus any subsequence has a further subsequence
that converges to some limit point. Due to 
Theorem 3.5 and Corollary 3.9 in \cite{roth2013stochastic}
any such limit point has support on the Birkhoff center of the set of ODEs\footnote{Note that
this result holds in a  more general setting that the one
considered here, where the drift is not necessarily Lipschitz continuous
and is characterized by a differential inclusion.}.
As $\pi$ is a global attractor, the Birkhoff center is the singleton $\{\pi\}$
and the only possible limit point is therefore the Dirac measure on $\pi$.
Thus, every subsequence of $X^{(N)}_\infty$ has a further subsequence that
converges to the same limit, which implies that the entire sequence converges
to this limit. 
\end{proof}

Note that the above corollary is very general. In order to apply it, we do
need to truncate the buffer to some finite size $B$ (as in \cite{gast2010mean,ganesh2010}). This is 
not a real restriction from a practical point of view as there is 
virtually no difference between having an infinite buffer or a huge finite buffer, say of
size $B=10^{15}$ (provided that the system is stable in case of an infinite
buffer). Establishing a similar result for infinite buffers is technically more demanding
as one needs to establish the existence of $X^{(N)}_\infty$ and prove
that this sequence converges. 

An issue regarding the convergence is that
$\Omega$ is not compact due to the condition $\sum_\ell h_{\ell,1} < \infty$
(note that as $\Omega$ is not a finite 
dimensional Euclidean space, compactness of a set depends on the norm used).  For systems
with exponential job sizes the following approach is often used, see \cite{arpanStSyst,martin1999,vvedenskaya3}.
One first drops the condition that prevents $\Omega$ from being compact, thus in
our case we consider $\Omega_\infty$ which equals $\Omega$ 
without the requirement $\sum_\ell h_{\ell,1} < \infty$.
Then one picks a suitable norm, for instance $||h-\tilde h|| = \sum_{i=1}^n \sum_\ell \frac{|h_{\ell,i}-\tilde h_{\ell,i}|^2}{2^\ell}$ in our case\footnote{Note that pointwise convergence of $h(t)$ to $\pi$ also implies convergence under this norm.}, such that $\Omega_\infty$ is compact.
Thus, by Prokhorov's theorem any subsequence $X^{(N_k)}_\infty$  of $X^{(N)}_\infty$ has a further subsequence $X^{(N_k')}_\infty$ that converges
to some measure on $\Omega_\infty$.
Next one argues that any such limit point $\pi^*$ necessarily concentrates  on $\Omega$.
Note that while $X^{(N)}_\infty(\Omega)=1$ for all $N$, weak convergence
does not immediately imply that $\pi^*(\Omega)=1$ as $\Omega$ is an open set. 

To show that $\pi^*(\Omega)=1$, it suffices that $E_{\pi^*}[\sum_\ell h_{\ell,1}] < \infty$.
As $E_{\pi^*}[\sum_\ell h_{\ell,1}] \leq \liminf E_{X^{(N_k')}_\infty}[\sum_\ell h_{\ell,1}]$
(due to Portmanteau's theorem as $\sum_\ell h_{\ell,1}$ is continuous and bounded from below), 
$\pi^*(\Omega)=1$ if $E_{X^{(N)}_\infty}[\sum_\ell h_{\ell,1}]$ is bounded by some constant $c$ for all $N$. Finally, this constant $c$ is shown to be the mean queue length in some finite stable queueing system (an M/M/1 queue in \cite{vvedenskaya3}, a set of $J$ queues with random routing in \cite{arpanStSyst} and a classic Jackson network in \cite{martin1999}).

Having established that $\pi^*(\Omega)=1$ for any limit point $\pi^*$, one can
use Theorem 1 of \cite{leboudec3} and the global attraction 
to show that $\pi^*$ is the Dirac measure $\delta_\pi$.
To apply this theorem weak convergence over finite time scales suffices.

In case of phase-type service exactly the same line of reasoning can be applied. The main
step that requires extra care is to show that the mean queue length of a 
queue in the $N$-th system is bounded by
some constant $c$, for instance by letting $c$ be
the mean queue length of a queue in a set of $N$ independent $M/PH/1$ queues
(which should hold for any load balancing strategy that performs better than random). 
Recall that an $M/PH/1$ queue (with a load below one) has a finite mean queue length
as the second moment of a phase-type distribution is finite.

\section{Examples revisited}\label{sec:examples2}

In this section we discuss assumptions \ref{ass:unique} to \ref{ass:sumfL1} for the
examples listed in Section \ref{sec:examples}. With respect to assumption \ref{ass:unique},
we only briefly discuss the existence of a fixed point as the existence of a unique
solution $h(t)$ with $h(0)=h$ for $h \in \Omega$ can be easily verified by
checking the Lipschitz continuity of the drift on $\Omega$.

\subsection{JSQ(d): Join-the-Shortest-Queue among d randomly selected servers}\label{ex:revisit2}

The existence of a fixed point when the buffer size $B$ is finite is easy to establish (see Section \ref{sec:JSQKd} with $K=1$).
For an infinite buffer size $B$, the existence of a fixed point in $\Omega$ follows from \cite[Section 8]{bramsonLB_QUESTA}
as the distributions belonging to $\mathcal{C}_0$ have a decreasing hazard rate. 

Assumption \ref{ass:f} is trivial to verify. 
To check whether Assumption \ref{ass:fl} holds, 
we can write $g_{(\ell_1,\ldots,\ell_n)}(h)$ as $\sum_{i=1}^n (h_{\ell_i,i}-h_{\ell_i,i+1})$  and similarly
$f_{(\ell_1,\ldots,\ell_n)}(h)$ equals $\sum_{i=1}^n (f_{\ell_i,i}(h)-f_{\ell_i,i+1}(h))$. Therefore,
\begin{align}
&f_{(\ell_1,\ldots,\ell_n)}(h) = \lambda 
\sum_{i=1}^n 1[\ell_i > 1]  \left(  \sum_{j=0}^{d-1} h_{\ell_i-1,1}^j h_{\ell_i,1}^{d-1-j} \right)
\nonumber \\
& \hspace*{1.5cm} \cdot [(h_{\ell_i-1,i}-h_{\ell_i,i})-(h_{\ell_i-1,i+1}-h_{\ell_i,i+1})] \nonumber \\
 &\ \ \ \ = \lambda 
\sum_{i=1}^n 1[\ell_i > 1]  \left(  \sum_{j=0}^{d-1} h_{\ell_i-1,1}^j h_{\ell_i,1}^{d-1-j} \right) 
\nonumber \\
& \hspace*{1.5cm} \cdot [(h_{\ell_i-1,i}-h_{\ell_i-1,i+1})-(h_{\ell_i,i}-h_{\ell_i,i+1})] 
\label{eq:fls_SQd}
\end{align}
If $\ell_1 > \ell_2 > \ldots > \ell_n$, this can be written as 
\begin{align*}
&f_{(\ell_1,\ldots,\ell_n)}(h) = \lambda 
\sum_{i=1}^n \left(  \sum_{j=0}^{d-1} h_{\ell_i-1,1}^j h_{\ell_i,1}^{d-1-j} \right)\\
& \hspace*{1.5cm} \cdot [g_{(\ell_1,\ldots,\ell_{i-1},\ell_i-1[\ell_i > 1],\ell_{i+1},\ldots,\ell_n)}(h)-g_{(\ell_1,\ldots,\ell_n)}(h)]. 
\end{align*}
and Assumption \ref{ass:fl} holds as $h \leq_C \tilde h$ implies that $h_{\ell,i} \leq \tilde h_{\ell,i}$ and
$g_{(\ell_1',\ldots,\ell_n')}(h) \leq g_{(\ell_1',\ldots,\ell_n')}(\tilde h)$ for $(\ell_1',\ldots,\ell_n') \not= 
(\ell_1,\ldots,\ell_n)$.  

If $\ell_i = \ell_{i+1} > 1$ for some $i$, the above expression cannot be directly used
as $\ell_i - 1[\ell_i > 1] \geq \ell_{i+1}$ does not hold. In general
assume $\tilde \ell_1 > \tilde \ell_2 > \ldots > \tilde \ell_k$ are the unique values appearing in the 
sequence $\ell_1 \geq \ell_2 \geq \ldots \geq \ell_n$ and let $\ell_{j_i}$ be the
first element in this sequence equal to $\tilde \ell_i$, then \eqref{eq:fls_SQd} becomes
  \begin{align*}
&f_{(\ell_1,\ldots,\ell_n)}(h) = \lambda 
\sum_{i=1}^k 1[\tilde \ell_i > 1]  \left(  \sum_{j=0}^{d-1} h_{\tilde \ell_i-1,1}^j h_{\tilde \ell_i,1}^{d-1-j} \right)\\
& \hspace*{1.5cm} \cdot [(h_{\tilde \ell_i-1,j_i}-h_{\tilde \ell_i-1,j_{i+1}})-
(h_{\tilde \ell_i,j_i}-h_{\tilde \ell_i,j_{i+1}})]. 
\end{align*}
Assumption \ref{ass:fl} now follows as $(h_{\tilde \ell_i-1,j_i}-h_{\tilde \ell_i-1,j_{i+1}})-
(h_{\tilde \ell_i,j_i}-h_{\tilde \ell_i,j_{i+1}})$ can be written
as $g_{(\ell_1',\ldots,\ell_n')}(h)-g_{(\ell_1,\ldots,\ell_n)}(h)$
with $\ell_s' = \ell_s - 1$ for $j_i \leq s < j_{i+1}$ and
$\ell_s'=\ell_s$ otherwise. Note that $\ell_1' \geq \ell_2' \geq \ldots \geq \ell_n'$ as
required because $\tilde \ell_i > \tilde \ell_{i+1}$.

Assumption \ref{ass:sumfL1} with $L = 1$ is immediate as $\sum_{\ell \geq 1} f_{\ell,1}(h) = \lambda$
for any $h \in \Omega$, meaning we can pick $a_{1,\pi}(h)=0$.
Finally, as $\sum_{\ell \geq L} (f_{\ell,1}(h)-f_{\ell,1}(\pi)) = \lambda (h_{L-1,1}^d - 
\pi_{L-1,1}^d)$, setting
$a_{L,\pi}(h) = 0$, $b_{L,L-1,1}(h)=\lambda \sum_{j=0}^{d-1} (h_{L-1,1})^j (\pi_{L-1,1})^{d-1-j} \leq \lambda d$ 
and $b_{L,\ell,j}(h)=0$ for $(\ell,j) \not= (L-1,1)$ verifies 
Assumption \ref{ass:sumfL1} with $L > 1$.

When the buffer is finite of size $B$, the above discussion remains valid, except that we
need to set $a_{L,\pi}(h) = \lambda (h_{B,1}^d-\pi_{B,1}^d)$, for $L \geq 1$, such that Assumption \ref{ass:sumfL1} holds.

\subsection{Pull and push strategies}

In this example it is possible to show that for $\lambda < 1$ the set of ODEs has a unique fixed point that can be computed by determining the invariant distribution of an ergodic Quasi-Birth-Death Markov chain
\cite{vanhoudt_stealing}. Note that the issue of global attraction is not
addressed in \cite{vanhoudt_stealing}.
Assumption \ref{ass:f} is readily verified.
To verify Assumption \ref{ass:fl} it is not hard to show
that $f_{(\ell_1,\ldots,\ell_n)}(h)$ can be written as
\begin{align*}
 f_{(\ell_1,\ldots,\ell_n)}&(h) = \lambda  g_{(\ell_1-1,\ell_2-1[\ell_2>1],\ldots,\ell_n-1[\ell_n>1])}(h)\\&-\lambda g_{(\ell_1,\ldots,\ell_n)}(h)-r(1-h_{1,1}) g_{(\ell_1,\ldots,\ell_n)}(h)\\
& +r(1-h_{1,1}) g_{(\ell_1+1,\ell_2+1[\ell_2 > 1],\ldots,\ell_n+1[\ell_n > 1])}(h).
\end{align*}
Thus if $g_{(\ell_1,\ldots,\ell_n)}(h) = g_{(\ell_1,\ldots,\ell_n)}(\tilde h)$, then
\begin{align*}
f_{(\ell_1,\ldots,\ell_n)}&(\tilde h)- f_{(\ell_1,\ldots,\ell_n)}(h) = \lambda  g_{(\ell_1-1,\ell_2-1[\ell_2>1],\ldots,\ell_n-1[\ell_n>1])}(\tilde h)\\
&-\lambda g_{(\ell_1-1,\ell_2-1[\ell_2>1],\ldots,\ell_n-1[\ell_n>1])}(h)\\
&+r(1-\tilde h_{1,1}) g_{(\ell_1+1,\ell_2+1[\ell_2 > 1],\ldots,\ell_n+1[\ell_n > 1])}(\tilde h)\\
&- r(1-h_{1,1}) g_{(\ell_1+1,\ell_2+1[\ell_2 > 1],\ldots,\ell_n+1[\ell_n > 1])}(h) \\
&+ r (\tilde h_{1,1}-h_{1,1}) g_{(\ell_1,\ldots,\ell_n)}(\tilde h) \geq 0
\end{align*}
if $h \leq_C \tilde h$ as $g_{(\ell_1,\ldots,\ell_n)}(\tilde h) \geq g_{(\ell_1+1,\ell_2+1[\ell_2 > 1],\ldots,\ell_n+1[\ell_n > 1])}(\tilde h)$.

Assumption \ref{ass:sumfL1} with $L = 1$ follows by noting that $\sum_{\ell \geq 1} f_{\ell,1}(h) = \lambda = \sum_{\ell \geq 1} f_{\ell,1}(\pi)$.
Finally, Assumption \ref{ass:sumfL1} with $L > 1$ can be verified as follows.
Note that 
\begin{align*}
\sum_{\ell \geq L} &(f_{\ell,1}(h)-f_{\ell,1}(\pi)) = \lambda (h_{L-1,1}-\pi_{L-1,1})\\
& - r((1-h_{1,1}) h_{L,1} - (1-\pi_{1,1}) \pi_{L,1}) =\lambda (h_{L-1,1}-\pi_{L-1,1})  \\
&+ r(h_{1,1}-\pi_{1,1})h_{L,1} - r(1-\pi_{1,1})(h_{L,1}-\pi_{L,1}),  
\end{align*}
meaning Assumption \ref{ass:sumfL1} with $L > 1$ holds with $a_{L,\pi}(h) = r(1-\pi_{1,1})(h_{L,1}-\pi_{L,1})$, $b_{L,1,1}(h)= rh_{L,1} \leq r$,
$b_{L,L-1,1}(h) = \lambda$ and all other $b_{L,\ell,j}(h)$ equal to zero.

\subsection{JSQ(K,d): Join-the-Shortest-K-Queues among d randomly selected servers } \label{sec:JSQKd}

With respect to assumption \ref{ass:unique}, we limit ourselves to the case where $B$ is finite. 
The existence of a fixed point in $\Omega_B$ follows from the fact that a convex compact forward invariant set $\mathcal{K} \subset \mathbb{R}^{Bn}$
of a dynamical system has a fixed point in $\mathcal{K}$ \cite{bhatia2002stability}, which is not hard to prove using Brouwer's fixed point
theorem.

Contrary to the previous two examples, verifying Assumption \ref{ass:f} requires some work. First note that 
$f_{\ell,1}(h)$ only depends on $h_{\ell-1,1}$ and $h_{\ell,1}$ and therefore the functions $f_{\ell,1}(h)$
are non-decreasing in $h_{\ell',i'}$ for $(\ell',i')\not=(\ell,1)$ if
\[\phi_K(x) = \sum_{s=0}^{K-1} (K-s) {d \choose s} x^{d-s} (1-x)^s, \]
is non-decreasing on $[0,1]$. We now prove that 
\begin{align}\label{eq:fpx}
\phi_K'(x) = \sum_{s=0}^{K-1} d {d-1 \choose s} x^{d-s-1} (1-x)^s, 
\end{align}
which is clearly positive on $[0,1]$.
By definition of $\phi_K(x)$ we have
\[\phi_K'(x) = \sum_{s=0}^{K-1} (K-s) {d \choose s} x^{d-s-1} (1-x)^{s-1} ((1-x)d-s). \]
By induction on $K$ we find
\begin{align*}\phi_K'(x) &= \sum_{s=0}^{K-1} {d \choose s} x^{d-s-1} (1-x)^{s-1} ((1-x)d-s) \\&+
\sum_{s=0}^{K-2} d {d-1 \choose s} x^{d-s-1} (1-x)^s.\end{align*} 
Hence, \eqref{eq:fpx} is equivalent to showing that
\begin{align*}
\sum_{s=0}^{K-1}  &d{d \choose s}  x^{d-s-1} (1-x)^s = \\
 & \sum_{s=1}^{K-1}  d{d-1 \choose s-1} x^{d-s-1} (1-x)^{s-1} + d{d-1 \choose K-1} x^{d-K} (1-x)^{K-1}.
\end{align*} 
which is easy to establish (using induction on $K$ once more).

Let us now focus on the functions $f_{\ell,i}(h)$ with $i > 1$. Clearly, these functions are increasing in
$h_{\ell-1,i}$. It remains to show that they are also increasing in $h_{\ell-1,1}$ and $h_{\ell,1}$, which holds if
\[ \xi_K(x_1,x_2) = \frac{\phi_K(x_2)-\phi_K(x_1)}{x_2-x_1},\]
is increasing in both components for $0 \leq x_1 \leq x_2 \leq 1$. As $\xi_K(x_1,x_2)$ is symmetric, it suffices to argue that $\xi_K(x_1,x_2)$
is increasing in $x_1$. Further, demanding that the derivative of $\xi_K(x_1,x_2)$ with respect to $x_1$ is non-negative
is equivalent to
\[ \phi_K(x_2) \geq \phi_K(x_1) + \phi_K'(x_1) (x_2-x_1),\]
which holds if and only if $\phi_K(x)$ is convex.
Using \eqref{eq:fpx} we have for $K < d$
\[\phi_K''(x) = \sum_{s=0}^{K-1} d {d-1 \choose s} x^{d-s-2} (1-x)^s ((d-1)(1-x)-s).\]
Using induction on $K$ this can be rewritten as
\[\phi_K''(x) = d(d-1) {d-2 \choose K-1} x^{d-K-1} (1-x)^{K-1},\]
which is clearly positive on $[0,1]$. For $K=d$, we have $\phi_K''(x)=0$ as $\phi_K(x)=dx$.

We now proceed with Assumption \ref{ass:fl}.
As $f_{(\ell_1,\ldots,\ell_n)}(h)$ equals $\sum_{i=1}^n (f_{\ell_i,i}(h)-f_{\ell_i,i+1}(h))$ and $\ell_1 > 1$, 
we note that
\begin{align*}
 f_{(\ell_1,\ldots,\ell_n)}(h)  =\lambda 
& \sum_{i=1}^n 1[\ell_i > 1]  \xi_K(h_{\ell_i,1},h_{\ell_i-1,1}) \\
& \ \ \cdot [(h_{\ell_i-1,i}-h_{\ell_i,i})-(h_{\ell_i-1,i+1}-h_{\ell_i,i+1})]
\end{align*}
If $\ell_1 > \ell_2 > \ldots > \ell_n$, this can be written as 
\begin{align*}
 & f_{(\ell_1,\ldots,\ell_n)}(h)  =\lambda  
\sum_{i=1}^n \xi_K(h_{\ell_i,1},h_{\ell_i-1,1}) 
\\
& \ \ \cdot [g_{(\ell_1,\ldots,\ell_{i-1},\ell_i-1[\ell_i > 1],\ell_{i+1},\ldots,\ell_n)}(h)
-g_{(\ell_1,\ldots,\ell_n)}(h)], 
\end{align*}
where $\xi_K(x_1,x_2)$ was increasing in $x_1$ and $x_2$. 
The case where $\ell_i = \ell_{i+1}$ for some $i$ can be dealt
with in the same manner as in Example \ref{ex:revisit2}.

Assumption \ref{ass:sumfL1} with $L = 1$  holds as $\sum_{\ell \geq 1} f_{\ell,1}(h) = \lambda K = \sum_{\ell \geq 1} f_{\ell,1}(\pi)$.
Finally, with respect to Assumption \ref{ass:sumfL1}  with $L > 1$ we have
\begin{align*}
\sum_{\ell \geq L} (f_{\ell,1}(h)-f_{\ell,1}(\pi)) &= \lambda (\phi_K(h_{L-1,1}) - \phi_K(\pi_{L-1,1})) \\
 &= \lambda \xi_K(\pi_{L-1,1},h_{L-1,1}) (h_{L-1,1}-\pi_{L-1,1}),
\end{align*}
and $\xi_K(x_1,x_2) \leq \phi_K'(1) = d$ due to the convexity of $\phi_K(x)$. 

\section{Proof of Theorem 2}\label{sec:proof}

In this section we define $\nu_i = \mu_i (1-p_i)$ to ease the notation.
We start by showing that the order $\leq_C$ is preserved over time.

\begin{prop}\label{prop:mono}
Assume that (\ref{ass:unique}-\ref{ass:fl}) hold and let $h,\tilde h \in \Omega$.  
Let $h(t)$ and $\tilde h(t)$ be the unique solution of (\ref{eq:newODE1}-\ref{eq:newODE2}) with $h(0) = h$
and $\tilde h(0) = \tilde h$, respectively.
If $\mu_i(1-p_i)$ is decreasing in $i$ and 
$h \leq_C \tilde h$, then $h(t) \leq_C \tilde h(t)$ for any $t > 0$.
\end{prop}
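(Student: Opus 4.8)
The plan is to prove the statement by a Kamke-type quasimonotonicity (comparison) argument adapted to the cone that generates $\leq_C$. Let $\mathcal{K}=\{h:0\leq_C h\}$, so that $h\leq_C\tilde h$ means $\tilde h-h\in\mathcal{K}$, and observe that $\mathcal{K}$ is cut out by the family of linear inequalities $\phi\geq 0$, where $\phi$ ranges over the coordinate evaluations $h\mapsto h_{\ell,i}$ and the functionals $g_{(\ell_1,\ldots,\ell_n)}$ of \eqref{eq:order2}. It therefore suffices to verify the \emph{boundary (touching) condition}: for each such $\phi$ and all $h,\tilde h\in\Omega$ with $h\leq_C\tilde h$ and $\phi(h)=\phi(\tilde h)$, the drift $F$ given by the right-hand side of \eqref{eq:newODE1}--\eqref{eq:newODE2} satisfies $\langle\nabla\phi,F(h)\rangle\leq\langle\nabla\phi,F(\tilde h)\rangle$. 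To turn this into preservation of $\leq_C$ over time in our (open, infinite-dimensional) state space I would run the standard perturbation argument: replace the $\tilde h$-dynamics by $\dot{\tilde h}^\varepsilon=F(\tilde h^\varepsilon)+\varepsilon\eta$ for a fixed $\eta$ with $\phi(\eta)>0$ for every defining functional (e.g.\ $\eta_{\ell,i}>0$ and strictly decreasing in $i$), show by a first-crossing-time argument that $\phi(h(t))<\phi(\tilde h^\varepsilon(t))$ for all $t>0$ and all $\phi$, and then let $\varepsilon\downarrow 0$ using continuous dependence of the solutions on the parameter over finite horizons.

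For the boundary condition I would split $F=F^{(f)}+F^{(M)}$, where $F^{(f)}$ collects the terms involving the $f_{\ell,i}$ and $F^{(M)}$ collects the phase-change and service-completion terms. The $F^{(f)}$-contribution is precisely what Assumptions~\ref{ass:f} and~\ref{ass:fl} are made for: for a coordinate functional $\phi(h)=h_{\ell,i}$ the relevant quantity is $1[\ell>1]f_{\ell,i}(h)$, which is monotone by Assumption~\ref{ass:f} together with $h_{\ell,i}=\tilde h_{\ell,i}$ and $h_{\ell',i'}\leq\tilde h_{\ell',i'}$ for $(\ell',i')\neq(\ell,i)$; for $\phi=g_{(\ell_1,\ldots,\ell_n)}$ the relevant quantity is $f_{(\ell_1,\ldots,\ell_n)}(h)$, and monotonicity is exactly Assumption~\ref{ass:fl}, whose hypothesis $g_{(\ell_1,\ldots,\ell_n)}(h)=g_{(\ell_1,\ldots,\ell_n)}(\tilde h)$ is the touching condition.

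It then remains to handle $F^{(M)}$, which uses no assumption but relies on $\nu_i:=\mu_i(1-p_i)$ being decreasing and on the defining inequalities of $\Omega$. For a coordinate functional $h_{\ell,i}$ with $i\geq 2$, the phase-change term $(h_{\ell,i-1}-h_{\ell,i})p_{i-1}\mu_{i-1}$ is monotone because $h_{\ell,i}$ is frozen and $h_{\ell,i-1}\leq\tilde h_{\ell,i-1}$, and summation by parts over the phase index rewrites the service term as $h_{\ell,i}\nu_i-\sum_{k>i}h_{\ell,k}(\nu_{k-1}-\nu_k)$, which is monotone since $\nu_{k-1}>\nu_k$ and $h_{\ell,k}\leq\tilde h_{\ell,k}$. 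The case $i=1$ is more delicate: summation by parts turns the service term into $(h_{\ell,1}-h_{\ell+1,1})\nu_1-\sum_{k\geq 2}(h_{\ell,k}-h_{\ell+1,k})(\nu_{k-1}-\nu_k)$, which cannot be compared using the componentwise order alone. Here I would invoke \eqref{eq:order2} for the staircase indices $(\ell+1,\ldots,\ell+1,\ell,\ldots,\ell)$, for which $g_{(\ell+1,\ldots,\ell+1,\ell,\ldots,\ell)}(h)=h_{\ell+1,1}+(h_{\ell,k}-h_{\ell+1,k})$, to deduce $(h_{\ell,k}-h_{\ell+1,k})-(\tilde h_{\ell,k}-\tilde h_{\ell+1,k})\leq\tilde h_{\ell+1,1}-h_{\ell+1,1}$ for every $k$; combining this with $h_{\ell,1}=\tilde h_{\ell,1}$, with $\nu_1-\nu_n=\sum_{k\geq 2}(\nu_{k-1}-\nu_k)$ and with $\nu_n>0$ yields the required inequality.

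The genuinely hard case is $\phi=g_{(\ell_1,\ldots,\ell_n)}$. Writing $A=\bigcup_i\{(m,i):m\geq\ell_i\}$ for the corresponding set of single-queue states, the $F^{(M)}$-drift of $g_{(\vec\ell)}$ is the expected net rate at which the tagged queue enters $A$; since phase changes only move mass into $A$ and service completions only move mass out of $A$, this drift equals $\sum_i p_i\mu_iW_i^{\mathrm{in}}(h)-\sum_i\nu_iW_i^{\mathrm{out}}(h)$, where $W_i^{\mathrm{in}}(h)$ is the fraction of queues in phase $i$ with length in $[\ell_{i+1},\ell_i)$ and $W_i^{\mathrm{out}}(h)$ the fraction in phase $i$ with length in $[\ell_i,\ell_1]$. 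Using \eqref{eq:order2} on suitable indices together with the touching condition one checks $W_i^{\mathrm{in}}(h)\leq W_i^{\mathrm{in}}(\tilde h)$, so the first sum is harmless, and the remaining task is to prove $\sum_i\nu_iW_i^{\mathrm{out}}(h)\geq\sum_i\nu_iW_i^{\mathrm{out}}(\tilde h)$. For this I would combine the touching identity $\sum_iW_i^{\mathrm{out}}(h)=g_{(\vec\ell)}(h)-h_{\ell_1+1,1}(h)$ with the limiting inequalities obtained by letting the first coordinates of $g_{(M,\ldots,M,\ell_i,\ldots,\ell_n)}$ tend to infinity (legitimate since $\sum_\ell h_{\ell,1}<\infty$ forces $h_{M,j}\to 0$) and with the $\Omega$-constraints $x_{m,i}\geq 0$, and then arrange the terms by summation by parts so that the decreasing weights $\nu_i$ pair with partial sums of a definite sign. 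I expect this last step to be the main obstacle: it is precisely the point where the componentwise order breaks down (cf.\ the example in Section~\ref{sec:order}) and where the full strength of $\leq_C$, of the definition of $\Omega$, and of ``$\nu_i$ decreasing'' must be used simultaneously, so that getting the index-and-sign bookkeeping to close is the crux of the proof.
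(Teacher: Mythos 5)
Your overall strategy is the one the paper itself uses: treat $\leq_C$ as generated by the family of linear functionals $h\mapsto h_{\ell,i}$ and $h\mapsto g_{(\ell_1,\ldots,\ell_n)}(h)$ and verify a quasimonotonicity (touching) condition for each one. Your treatment of the coordinate functionals (summation by parts in the phase index for $i\geq 2$; the staircase instance of \eqref{eq:order2} with $\ell_1=\cdots=\ell_{j-1}=\ell+1$ and $\ell_j=\cdots=\ell_n=\ell$ for $i=1$) and of the phase-change contribution to $g_{(\ell_1,\ldots,\ell_n)}$ coincides with the paper's argument, and your $\varepsilon$-perturbation scaffolding is a legitimate, in fact more careful, way to pass from the touching condition to invariance of the order.

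However, the step you flag as ``the main obstacle'' --- showing $\sum_i\nu_iW_i^{\mathrm{out}}(h)\geq\sum_i\nu_iW_i^{\mathrm{out}}(\tilde h)$ --- is left open, and the sketch you give for it (limits of $g_{(M,\ldots,M,\ell_i,\ldots,\ell_n)}$ as $M\to\infty$, the constraints $x_{m,i}\geq 0$) is not how it closes; none of that is needed. The missing observation is that the tail sums of the $W_i^{\mathrm{out}}$ are themselves order-monotone functionals: writing $G_i(h)=g_{(\ell_1+1,\ldots,\ell_1+1,\ell_i,\ldots,\ell_n)}(h)$ with $i-1$ leading copies of $\ell_1+1$ (so $G_1=g_{(\ell_1,\ldots,\ell_n)}$ and $G_{n+1}(h)=h_{\ell_1+1,1}$), one has $W_i^{\mathrm{out}}(h)=G_i(h)-G_{i+1}(h)$, and Abel summation gives
\[
-\sum_{i=1}^n\nu_i\,W_i^{\mathrm{out}}(h)=-\nu_1G_1(h)+\sum_{i=2}^n(\nu_{i-1}-\nu_i)\,G_i(h)+\nu_nG_{n+1}(h).
\]
The term $G_1$ is frozen by the touching condition, and every other term is a positive coefficient ($\nu_{i-1}-\nu_i>0$ or $\nu_n>0$) multiplying a functional that is monotone under $\leq_C$ (a $g$-functional, or for $G_{n+1}$ a coordinate), so the required inequality is immediate. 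This is precisely where ``$\nu_i$ decreasing'' enters, and it is a one-line bookkeeping identity rather than a delicate sign analysis; without it your proof of the key case is incomplete.
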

\begin{proof}
Assume that at some time $t$ we have $h_{\ell,i}(t) = \tilde h_{\ell,i}(t)$ for some $\ell$ and $i$, while
$h(t) \leq_C  \tilde h(t)$. We need to argue that 
\[ d\tilde h_{\ell,i}(t)/dt \geq  dh_{\ell,i}(t)/dt,\] 
as the order is otherwise violated at time $t+$. 

As
$h(t) \leq_C  \tilde h(t)$ implies that $h_{\ell',i'}(t) \leq \tilde h_{\ell',i'}(t)$
for all $\ell'$ and $i'$, it would be sufficient that $dh_{\ell,i}(t)/dt$ is non-decreasing in all $h_{\ell',i'}(t)$
with $(\ell',i') \not= (\ell,i)$.  Looking at (\ref{eq:newODE1}-\ref{eq:newODE2}) and due to
Assumption \ref{ass:f},  we see that this is clearly the case, except perhaps for the sums over $j$ (that are due to the service completions).

For $i > 1$, we have
\begin{align*}
-\sum_{j=i}^n & (h_{\ell,j}(t)-h_{\ell,j+1}(t)) \nu_j = \sum_{j=i+1}^n h_{\ell,j}(t) (\nu_{j-1}-\nu_{j}) -h_{\ell,i}(t) \nu_i,
\end{align*} 
meaning $dh_{\ell,i}(t)/dt$ is non-decreasing in any $h_{\ell',i'}(t)$ with $(\ell',i') \not= (\ell,i)$  when
$i > 1$, as $\nu_i = \mu_i(1-p_i)$ is decreasing in $i$ (and positive).

For $i=1$, we find
\begin{align*}
 -\sum_{j=1}^n &\left[(h_{\ell,j}(t)-h_{\ell,j+1}(t))-(h_{\ell+1,j}(t)-h_{\ell+1,j+1}(t))\right]\nu_j  \\
&= \sum_{j=2}^n (h_{\ell,j}(t)-h_{\ell+1,j}(t)) (\nu_{j-1}-\nu_j) \\
&\ \ \ \ \ \ -(h_{\ell,1}(t)-h_{\ell+1,1}(t)) \nu_1 \\
&= \sum_{j=2}^n (h_{\ell+1,1}(t)+ h_{\ell,j}(t)-h_{\ell+1,j}(t)) (\nu_{j-1}-\nu_j) \\
&\ \ \ \ \ \ + h_{\ell+1,1}(t) \nu_n  - h_{\ell,1}(t) \nu_1. 
\end{align*} 
This expression is decreasing in
 $h_{\ell+1,j}(t)$, for $j > 1$, which may appear as a problem. 
However, as $h(t) \leq_C \tilde h(t)$, 
\eqref{eq:order2} with $\ell_1 = \ldots = \ell_{j-1}=\ell+1$ and $\ell_j = \ldots = \ell_n = \ell$
implies that $h_{\ell,j}(t)+h_{\ell+1,1}(t)-h_{\ell+1,j}(t) \leq 
\tilde h_{\ell,j}(t)+\tilde h(t)_{\ell+1,1}-\tilde h_{\ell+1,j}(t)$.
As a result $dh_{\ell,1}(t)/dt$ does not exceed $d\tilde h_{\ell,1}(t)/dt$ as required.

We also need to verify that \eqref{eq:order2} remains valid, which corresponds to verifying that
$g_{(\ell_1,\ldots,\ell_n)}(h(t)) \leq g_{(\ell_1,\ldots,\ell_n)}(\tilde h(t))$ remains valid. 
Assume that $g_{(\ell_1,\ldots,\ell_n)}(h(t)) = g_{(\ell_1,\ldots,\ell_n)}(\tilde h(t))$
for some $(\ell_1,\ldots,\ell_n)$, then we need to argue that
\[dg_{(\ell_1,\ldots,\ell_n)}(\tilde  h(t))/dt \geq  dg_{(\ell_1,\ldots,\ell_n)}(h(t))/dt,\] 
whenever $h(t) \leq_C \tilde h(t)$ to complete the proof. Due to Assumption \ref{ass:fl}, we can restrict ourselves
to showing that the terms of $dg_{(\ell_1,\ldots,\ell_n)}(h(t))/dt$ corresponding to phase changes and service completions   are increasing in
$g_{(\ell_1',\ldots,\ell_n')}(h(t))$ when $(\ell_1',\ldots,\ell_n')\not=(\ell_1,\ldots,\ell_n)$.

Phase changes increase $g_{(\ell_1,\ldots,\ell_n)}(h(t))$ if such a change occurs in a queue
in phase $i$ with a length in $[\ell_{i+1},\ell_i -1]$. Therefore  phase changes increase
$g_{(\ell_1,\ldots,\ell_n)}(t)$  at rate
\[ \sum_{i=1}^{n-1} \mu_i p_i (g_{(\ell_1,\ldots,\ell_{i-1},\ell_{i+1},\ell_{i+1},\ldots,\ell_n)}(h(t))-g_{(\ell_1,\ldots,\ell_n)}(h(t))).\]
Service completions in queues in phase $i$ that have a length in $[l_i,l_1]$ decrease $g_{(\ell_1,\ldots,\ell_n)}(h(t))$
at rate $\nu_i$ (as the initial service phase of the next job in service is phase $1$). The drift is therefore given by
\begin{align*}
-\sum_{i=1}^n &\nu_i (g_{(\underbrace{\scriptstyle \ell_1+1,\ldots,\ell_1+1}_{i-1 \ \mbox{\tiny   times}},\ell_i,\ldots,\ell_n)}(h(t))
- g_{(\underbrace{\scriptstyle \ell_1+1,\ldots,\ell_1+1}_{i \ \mbox{\tiny times}},\ell_{i+1},\ldots,\ell_n)}(h(t)))\\
&=  \sum_{i=2}^n g_{(\underbrace{\scriptstyle \ell_1+1,\ldots,\ell_1+1}_{i-1 \ \mbox{\tiny times}},\ell_i,\ldots,\ell_n)}(h(t))
(\nu_{i-1} -\nu_i) \\ 
&-g_{(\ell_1,\ldots,\ell_n)}(h(t)) \nu_1  + g_{(\ell_1+1,\ldots,\ell_1+1)}(h(t)) \nu_n.
\end{align*}  
\end{proof}

The next proposition shows that it suffices to prove attraction for points $h \in \Omega$
for which $h \leq_C \pi$ or $\pi \leq_C h$, where $\pi$ is a fixed point of the ODEs (\ref{eq:newODE1}-\ref{eq:newODE2}).

\begin{prop}\label{prop:sufficient}
Let $h \in \Omega$ and assume $\mu_i(1-p_i)$ is decreasing in $i$, then the trajectory $h(t)$ starting in $h(0) \in \Omega$ 
at time $0$ converges pointwise to $\pi$ provided that
for any $h \in \Omega$ with $h \leq_C \pi$ or $\pi \leq_C h$, $h(t)$ with $h(0)=h$ converges pointwise to $\pi$.
\end{prop}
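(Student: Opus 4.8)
The plan is to mimic the three-step monotonicity argument sketched in the introduction, but with the componentwise order replaced by $\leq_C$. The key engine is Proposition \ref{prop:mono}, which says that $\leq_C$ is preserved by the flow. So given an arbitrary $h \in \Omega$, I would first construct two comparison points $h^{(l)}, h^{(u)} \in \Omega$ with $h^{(l)} \leq_C h \leq_C h^{(u)}$ and, crucially, with $h^{(l)} \leq_C \pi \leq_C h^{(u)}$ as well. Then by the hypothesis of the proposition, the trajectories $h^{(l)}(t)$ and $h^{(u)}(t)$ starting at these two points both converge pointwise to $\pi$. By Proposition \ref{prop:mono}, for every $t > 0$ we have $h^{(l)}(t) \leq_C h(t) \leq_C h^{(u)}(t)$, and since $\leq_C$ implies the componentwise inequality \eqref{eq:order1}, a sandwich/squeeze argument on each coordinate $h_{\ell,i}(t)$ forces $h(t) \to \pi$ pointwise. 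That is the whole skeleton.

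The real work is the construction of $h^{(l)}$ and $h^{(u)}$. For the lower point one expects something like the ``empty'' or near-empty state: take $h^{(l)}$ to be the configuration in which essentially all queues are empty (or have the minimal queue-length/phase profile compatible with $\Omega$), which should be $\leq_C$-below everything, in particular below both $h$ and $\pi$. One must check this candidate genuinely lies in $\Omega$ (the inequalities defining $\Omega$, including the ``unexpected'' one $h_{\ell,i}+h_{\ell+1,i+1}\geq h_{\ell+1,i}+h_{\ell,i+1}$) and genuinely satisfies both \eqref{eq:order1} and \eqref{eq:order2} relative to an arbitrary element of $\Omega$. For the upper point one wants a state dominating both $h$ and $\pi$ in $\leq_C$; a natural guess is to push all mass of $h$ and $\pi$ upward in queue length and concentrate the phase structure so that the $g_{(\ell_1,\dots,\ell_n)}$ functionals are maximized — for instance a state where every queue is ``very long'' and in phase $1$, scaled so that $\sum_\ell h^{(u)}_{\ell,1} < \infty$ still holds. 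One then verifies $h^{(u)} \in \Omega$ and that each defining inequality of $\leq_C$ holds with $h$ (and with $\pi$) on the small side; the summability condition on $\pi$ (which follows from Proposition \ref{lem:anypi} giving the explicit $\pi_{1,i}$ plus the decay built into $\Omega$) is what makes a finite-norm upper bound possible.

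The step I expect to be the main obstacle is verifying inequality \eqref{eq:order2} for the candidate $h^{(u)}$ against an arbitrary $h \in \Omega$, uniformly over all admissible index tuples $\ell_1 \geq \cdots \geq \ell_n$. Unlike the componentwise order, \eqref{eq:order2} is a family of linear inequalities indexed by these tuples, and it is not automatic that ``make everything big'' respects all of them simultaneously; one likely needs the bound \eqref{eq:helpbound}, namely $g_{(\ell_1,\dots,\ell_n)}(h) \leq h_{\ell_n,1}$, to reduce the family of constraints to a statement about the single coordinate sequence $(h_{\ell,1})_\ell$, and then choose $h^{(u)}$ so that $h^{(u)}_{\ell_n,1}$ dominates $g_{(\ell_1,\dots,\ell_n)}(h)$ for the relevant tuples while still keeping $h^{(u)} \in \Omega$. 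A secondary subtlety is that pointwise convergence of $h^{(l)}(t)$ and $h^{(u)}(t)$ is only assumed, not uniform, but since the squeeze is applied coordinate by coordinate (finitely many indices $(\ell,i)$ at a time, with $\ell$ bounded for each fixed target coordinate), pointwise convergence of the two bounding trajectories suffices and no uniformity is needed.
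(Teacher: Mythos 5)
Your skeleton is exactly the paper's proof: take $h^{(l)}=0$, build an upper point $h^{(u)}$ with $h\leq_C h^{(u)}$ and $\pi\leq_C h^{(u)}$, propagate the order with Proposition \ref{prop:mono}, and squeeze coordinatewise using \eqref{eq:order1}; and you correctly identify \eqref{eq:helpbound} as the tool that collapses the family of constraints \eqref{eq:order2} to the phase-$1$ marginal. The one concrete suggestion that would fail is your candidate for $h^{(u)}$: a state with ``every queue very long and in phase $1$'' gives $h^{(u)}_{\ell,i}=0$ for $i\geq 2$, hence $g_{(\ell_1,\ldots,\ell_n)}(h^{(u)})=h^{(u)}_{\ell_1,1}$, indexed by the \emph{largest} threshold $\ell_1$; since $g_{(\ell_1,\ldots,\ell_n)}(h)$ can be as large as $h_{\ell_n,1}$ with $\ell_1\gg\ell_n$, dominating it would force $h^{(u)}_{\ell,1}$ to stay bounded away from zero for all $\ell$, contradicting $\sum_\ell h^{(u)}_{\ell,1}<\infty$. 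The correct construction goes the opposite way: the paper sets $h^{(u)}_{\ell,i}=\max(h_{\ell,1},\pi_{\ell,1})$ for every $i$ (i.e., all mass in the top phase), so that $g_{(\ell_1,\ldots,\ell_n)}(h^{(u)})=h^{(u)}_{\ell_n,1}$ exactly attains the upper bound of \eqref{eq:helpbound}, and membership in $\Omega$ (including the inequality $h_{\ell,i}+h_{\ell+1,i+1}\geq h_{\ell+1,i}+h_{\ell,i+1}$, which holds with equality, and summability, since $\max(h_{\ell,1},\pi_{\ell,1})\leq h_{\ell,1}+\pi_{\ell,1}$) is immediate; no ``pushing mass upward in queue length'' is needed. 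Your closing remark that pointwise convergence of the bounding trajectories suffices for the coordinatewise squeeze is right and matches what the paper leaves implicit.
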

\begin{proof}
Due to Proposition \ref{prop:mono} it suffices to show that for any $h \in \Omega$ there exists
a $h^{(l)}, h^{(u)} \in \Omega$, with $h^{(l)} \leq_C \pi$ and $\pi \leq_C h^{(u)}$, 
such that $h^{(l)} \leq_C h \leq_C h^{(u)}$.
For $h^{(l)}$ we can simply pick the zero vector as $0 \leq_C h$ for any $h \in \Omega$.
For $h^{(u)}$ we set $h_{\ell,i}^{(u)} = \max(h_{\ell,1},\pi_{\ell,1})$, for $i=1,\ldots,n$. Hence,
\begin{align*}
h_{\ell,i} &\leq h_{\ell,1} \leq \max(h_{\ell,1},\pi_{\ell,1}) = h_{\ell,i}^{(u)} 
\end{align*} 
and for $\ell_1 \geq \ell_2 \geq \ldots \geq \ell_n \geq 1$ with $\ell_1 > \ell_n$
\begin{align*}
 h_{\ell_1,1}+&\sum_{i=2}^n (h_{\ell_i,i}-h_{\ell_{i-1},i}) \leq h_{\ell_n,1} \leq \max(h_{\ell_n,1},\pi_{\ell_n,1}) = h^{(u)}_{\ell_n,1} 
\\ & = h^{(u)}_{\ell_1,1} + \sum_{i=2}^n (h^{(u)}_{\ell_i,1}-h^{(u)}_{\ell_{i-1},1}) = h^{(u)}_{\ell_1,1} + \sum_{i=2}^n (h^{(u)}_{\ell_i,i}-h^{(u)}_{\ell_{i-1},i}), 
\end{align*} 
where the first inequality follows from \eqref{eq:helpbound}.
This shows that $h \leq_C h^{(u)}$ and similarly one finds that $\pi \leq_C h^{(u)}$. 
\end{proof}

Remark that when $B$ is finite we can simply use $h^{(u)}_{B,n} = 1$ and $h^{(u)}_{\ell,i} = 0$
for $(\ell,i)\not= (B,n)$.

\begin{lem}\label{lem:zL1z2}
Define $z_{1,L}(h(t)) = \sum_{\ell \geq L} h_{\ell,1}(t)$ and 
\[z_2(h(t))=\sum_{i=2}^n h_{1,i}(t) (R_i - R_{i-1}).\]
Then, 
\begin{align}\label{eq:z1L}
\frac{d}{dt}z_{1,L}(h(t)) &= \sum_{\ell \geq L} f_{\ell,1}(h(t)) - \sum_{j=1}^n (h_{L,j}(t)-h_{L,j+1}(t))\nu_j, 
\end{align}
and
\begin{align}\label{eq:z2}
\frac{d}{dt}z_2(h(t)) &= -h_{1,1}(t) + \sum_{j=1}^n (h_{1,j}(t)-h_{1,j+1}(t))\nu_j.
\end{align}
\end{lem}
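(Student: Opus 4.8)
\emph{Plan.} Both identities are obtained by summing the governing ODEs and simplifying; \eqref{eq:z1L} is a telescoping argument in the queue‑length index, while \eqref{eq:z2} is an algebraic reduction that exploits the recursion for the expected residual service times $R_i$. I would treat \eqref{eq:z1L} first, since it is self‑contained, and then carry out the bookkeeping for \eqref{eq:z2}.

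For \eqref{eq:z1L}, sum \eqref{eq:newODE1} over $\ell=L,\ldots,M$. The arrival terms give $\sum_{\ell=L}^{M} f_{\ell,1}(h(t))$, and for each $j$ the service terms form a telescoping sum in $\ell$, leaving $\sum_{j=1}^n[(h_{L,j}(t)-h_{L,j+1}(t))-(h_{M+1,j}(t)-h_{M+1,j+1}(t))]\nu_j$. Since $h(t)\in\Omega$ we have $\sum_\ell h_{\ell,1}(t)<\infty$, so $h_{M+1,j}(t)\le h_{M+1,1}(t)\to 0$ and the boundary term at $M+1$ vanishes as $M\to\infty$, while $\sum_{\ell=L}^{M}f_{\ell,1}(h(t))\to\sum_{\ell\ge L}f_{\ell,1}(h(t))$. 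To pass the limit inside the time derivative one notes that the partial sums $\sum_{\ell=L}^{M}h_{\ell,1}(s)$ converge pointwise to $z_{1,L}(h(s))$ and their derivatives are uniformly bounded on compact time intervals (the drift is bounded on $\Omega$); writing the integral form of the ODE and applying dominated convergence, then differentiating back, yields \eqref{eq:z1L}.

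For \eqref{eq:z2}, differentiate the finite sum $z_2(h(t))=\sum_{i=2}^n h_{1,i}(t)(R_i-R_{i-1})$ term by term and substitute \eqref{eq:newODE2} with $\ell=1$, where the term $1[\ell>1]f_{\ell,i}(h)$ drops. Writing $a_i=h_{1,i}(t)$ with $a_{n+1}=0$, the phase‑change contribution is $\sum_{i=2}^n (R_i-R_{i-1})p_{i-1}\mu_{i-1}(a_{i-1}-a_i)$, and \eqref{eq:Ris} gives $(R_i-R_{i-1})p_{i-1}\mu_{i-1}=R_{i-1}\nu_{i-1}-1$, so this equals $\sum_{j=1}^{n-1}(R_j\nu_j-1)(a_j-a_{j+1})$. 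The service contribution $-\sum_{i=2}^n(R_i-R_{i-1})\sum_{j=i}^n(a_j-a_{j+1})\nu_j$, after swapping the order of summation and using $\sum_{i=2}^j(R_i-R_{i-1})=R_j-R_1=R_j-1$, becomes $-\sum_{j=2}^n\nu_j(R_j-1)(a_j-a_{j+1})$. Adding the two, the $R_j\nu_j$‑weighted terms collapse to the boundary difference $R_1\nu_1(a_1-a_2)-R_n\nu_n(a_n-a_{n+1})=\nu_1(a_1-a_2)-a_n$, using $R_1=1$ and $R_n\nu_n=\mu_n(1-p_n)/\mu_n=1$ (recall $p_n=0$), while the remaining terms telescope to $-(a_1-a_n)$. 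Collecting everything leaves $-a_1+\sum_{j=1}^n\nu_j(a_j-a_{j+1})$, which is exactly \eqref{eq:z2}.

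The only genuinely delicate point is the interchange of the infinite sum and the time derivative in \eqref{eq:z1L}; the rest of that step, and the entire derivation of \eqref{eq:z2}, is routine telescoping and manipulation of the $R_i$ recursion.
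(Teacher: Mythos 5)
Your proposal is correct and follows essentially the same route as the paper: \eqref{eq:z1L} by telescoping the sums over $\ell$ in \eqref{eq:newODE1}, and \eqref{eq:z2} by substituting \eqref{eq:newODE2} with $\ell=1$, swapping the order of summation, applying \eqref{eq:Ris}, and using $R_1=1$ and $R_n\nu_n=1$. The only difference is that you make explicit the justification for interchanging the infinite sum and the time derivative in \eqref{eq:z1L}, which the paper treats as immediate.
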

\begin{proof}
The expression for $dz_{1,L}(h(t))/dt$ is immediate from \eqref{eq:newODE1}. For $dz_2(h(t))/dt$ we can make use of
\eqref{eq:newODE2} and obtain (after exchanging the order of the sums):
\begin{align*}
\frac{d}{dt}z_2(h(t)) &= \sum_{i=2}^n (h_{1,i-1}(t)-h_{1,i}(t))p_{i-1}\mu_{i-1}R_i \\
&-\sum_{i=2}^n (h_{1,i-1}(t)-h_{1,i}(t))p_{i-1}\mu_{i-1}R_{i-1} \\
&-\sum_{j=2}^n \left(\sum_{i=2}^j (R_i - R_{i-1}) \right) (h_{1,j}(t)-h_{1,j+1}(t)) \nu_j.
\end{align*}
Using \eqref{eq:Ris} on the first sum, we can rewrite this as
\begin{align*}
\frac{d}{dt}&z_2(h(t)) = \sum_{i=2}^n (h_{1,i-1}(t)-h_{1,i}(t))\nu_{i-1}R_{i-1} \\
&- \sum_{i=2}^n (h_{1,i-1}(t)-h_{1,i}(t))-\sum_{j=2}^n  (h_{1,j}(t)-h_{1,j+1}(t)) \nu_j R_j \\
&+ \sum_{j=2}^n  (h_{1,j}(t)-h_{1,j+1}(t)) \nu_j R_1\\
&\hspace*{-3mm} = (h_{1,1}(t)-h_{1,2}(t))\nu_1 R_1 - h_{1,1}(t)+h_{1,n}(t)-h_{1,n}(t)\nu_n R_n \\
&+ \sum_{j=2}^n  (h_{1,j}(t)-h_{1,j+1}(t)) \nu_j R_1.
\end{align*}
The result follows by noting that $R_1 = 1$ and $R_n = 1/\mu_n$.
\end{proof}

\begin{prop}\label{prop:attract}
Assume that (\ref{ass:unique}-\ref{ass:sumfL1}) hold and that $\mu_i(1-p_i)$ is decreasing in $i$. For any $h(0) \in \Omega$ with $h(0) \leq_C \pi$ or $\pi \leq_C h(0)$, $h(t)$ converges pointwise to $\pi$.
\end{prop}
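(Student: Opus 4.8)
The plan is to combine the monotonicity of Proposition~\ref{prop:mono} with a scalar Lyapunov functional assembled from the quantities $z_{1,1}$ and $z_2$ of Lemma~\ref{lem:zL1z2}, and then to pin down the $\omega$-limit set of the trajectory by a LaSalle-type argument that exploits the full hierarchy in Assumption~\ref{ass:sumfL1}.

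First I would treat the two cases by symmetry and assume $\pi\le_C h(0)$ (for $h(0)\le_C\pi$ every inequality below is reversed and one uses the second alternative on $a_{L,\pi}$ in Assumption~\ref{ass:sumfL1}). By Proposition~\ref{prop:mono}, comparing $h(t)$ with the constant trajectory $\pi$, we obtain $\pi\le_C h(t)$, hence $h_{\ell,i}(t)\ge\pi_{\ell,i}$ componentwise, for all $t\ge 0$.

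Next I would set $W(t)=z_{1,1}(h(t))+z_2(h(t))$. Adding the identities \eqref{eq:z1L} (with $L=1$) and \eqref{eq:z2} the service-completion sums cancel, and evaluating the resulting drift at the fixed point shows $\sum_{\ell\ge1}f_{\ell,1}(\pi)=\pi_{1,1}$; combining this with Assumption~\ref{ass:sumfL1} for $L=1$ (where the double sum is empty, so $\sum_{\ell\ge1}(f_{\ell,1}(h)-f_{\ell,1}(\pi))=-a_{1,\pi}(h)$) gives
\[
\frac{d}{dt}W(t)=-a_{1,\pi}(h(t))-\big(h_{1,1}(t)-\pi_{1,1}\big)\le 0 ,
\]
since $a_{1,\pi}(h(t))\ge 0$ and $h_{1,1}(t)\ge\pi_{1,1}$ when $\pi\le_C h(t)$. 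Using $R_i>R_{i-1}$ from Lemma~\ref{applem:Ri} one checks that $z_2(h)\ge z_2(\pi)$, and clearly $z_{1,1}(h)\ge z_{1,1}(\pi)$, whenever $\pi\le_C h$; hence $W$ is nonincreasing and bounded below by $W(\pi)$, so $W(t)\downarrow W_\infty\ge W(\pi)$ and $\int_0^\infty\big(h_{1,1}(t)-\pi_{1,1}\big)\,dt\le W(0)-W_\infty<\infty$. In particular $\sum_\ell h_{\ell,1}(t)\le W(t)\le W(0)$, so the trajectory remains in the set $K=\{h\in\Omega:\pi\le_C h,\ \sum_\ell h_{\ell,1}\le W(0)\}$, which is compact in the weighted norm $\|h\|=\sum_{i,\ell}|h_{\ell,i}|^2/2^\ell$; since the drift is bounded on $\Omega$, $h_{1,1}(\cdot)$ is Lipschitz, and a nonnegative, integrable, uniformly continuous function tends to $0$, so $h_{1,1}(t)\to\pi_{1,1}$.

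Finally I would invoke LaSalle's principle on $K$: the $\omega$-limit set $\omega(h(0))$ is nonempty, compact and invariant, $W$ is continuous on $K$ and constant $=W_\infty$ there, hence $\frac{d}{dt}W\equiv 0$ along every orbit in $\omega(h(0))$, which forces $h_{1,1}\equiv\pi_{1,1}$ and $a_{1,\pi}\equiv 0$ on $\omega(h(0))$. The remaining — and main — task is to show that the only invariant subset of $\{h\in K:\pi\le_C h,\ h_{1,1}=\pi_{1,1}\}$ is $\{\pi\}$. I expect this to be the hard part: the coefficients $b_{L,\ell,j}$ in Assumption~\ref{ass:sumfL1} are only bounded, not sign-definite, so $z_{1,L}$ is not monotone by itself, and the argument must proceed level by level — along an orbit inside $\omega(h(0))$ the drift \eqref{eq:z1L} of $z_{1,L}$, rewritten via Assumption~\ref{ass:sumfL1}, vanishes once the lower levels have been pinned to $\pi$, so $z_{1,L}$ is constant there, and then $z_{1,1}(h)\ge z_{1,1}(\pi)$, $z_2(h)\ge z_2(\pi)$ together with a Scheff\'e-type observation (a convergent series of nonnegative terms has each term convergent), the fixed-point formula of Proposition~\ref{lem:anypi}, and an induction in $\ell$ through \eqref{eq:newODE2} should force $h=\pi$ on $\omega(h(0))$. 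Once $\omega(h(0))=\{\pi\}$, $h(t)$ converges pointwise to $\pi$, completing the proof.
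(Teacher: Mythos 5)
Your opening matches the paper's proof exactly: the reduction to $\pi\le_C h(0)$ via Proposition~\ref{prop:mono}, the functional $z_{1,1}+z_2$, the identity $\sum_{\ell\ge1}f_{\ell,1}(\pi)=\pi_{1,1}$, and the bound $\int_0^\infty(h_{1,1}(t)-\pi_{1,1})\,dt<\infty$ are all there (you even supply the uniform-continuity step that the paper leaves implicit). But from that point on your argument has a genuine gap: everything after ``$h_{1,1}(t)\to\pi_{1,1}$'' is deferred to a LaSalle invariance argument whose core step --- showing that the only invariant subset of $\{h:\pi\le_C h,\ h_{1,1}=\pi_{1,1}\}$ is $\{\pi\}$ --- is exactly the content you would need to prove and is only sketched with ``should force $h=\pi$.'' Two concrete problems. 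First, LaSalle on this state space is delicate: $z_{1,1}(h)=\sum_\ell h_{\ell,1}$ is only lower semicontinuous, not continuous, in the weighted norm that makes your set $K$ compact (take $h^{(m)}_{\ell,1}=c/m$ for $\ell\le m$: it converges to $0$ in that norm while $\sum_\ell h^{(m)}_{\ell,1}=c$), so the assertion that $W$ is continuous and constant on the $\omega$-limit set does not stand as written. Second, and more importantly, your level-by-level sketch does not confront the obstacle the paper's proof is built around: under $\pi\le_C h$ the differences $h_{L,j}-h_{L,j+1}$ are \emph{not} comparable to $\pi_{L,j}-\pi_{L,j+1}$, so one cannot pin down level $L$ by integrating $h_{L,j}-h_{L,j+1}-(\pi_{L,j}-\pi_{L,j+1})$; the paper instead observes that \eqref{eq:order2} makes the combinations $h_{L,j}+h_{L-1,j+1}-h_{L,j+1}-(\pi_{L,j}+\pi_{L-1,j+1}-\pi_{L,j+1})$ nonnegative, and it is the uniform boundedness of $\Psi_{L,\tau}$ in \eqref{eq:toshowL}, obtained from $z_{1,L}$ together with Assumption~\ref{ass:sumfL1} and induction on $L$, that closes the argument. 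You also skip the intermediate step $h_{1,j}(t)\to\pi_{1,j}$ for $j\ge2$, which the paper gets from $z_2$ alone using the strict decrease of $\nu_j=\mu_j(1-p_j)$. The paper needs no invariance principle at all --- it is a chain of direct integral estimates --- and the missing pieces are precisely where the new partial order does its work, so they cannot be waved through.
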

\begin{proof}
We assume $\pi \leq_C h(0)$, the proof for $h(0) \leq_C \pi$ proceeds similarly.
We first show that $h_{1,1}(t)$ converges to $\pi_{1,1}$. As $\pi \leq_C h(t)$ for $t \geq 0$
due to Proposition \ref{prop:mono}, it suffices to show $\int_{t=0}^\infty (h_{1,1}(t) -\pi_{1,1}) dt < \infty$.
Let $z(h) = z_{1,1}(h) + z_2(h)$, then by Lemma \ref{lem:zL1z2} and Assumption \ref{ass:sumfL1} with $L = 1$, we have 
 $dz(h(t))/dt = \sum_{\ell \geq 1} f_{\ell,1}(\pi) - h_{1,1}(t) -a_1(h(t)) \leq 0$. Further $\sum_{\ell \geq 1} f_{\ell,1}(\pi) = \pi_{1,1}$
as $dz(\pi)/dt=0$. Therefore, 
\begin{align*}
\int_{t=0}^\tau (h_{1,1}(t) -\pi_{1,1}) dt &= -\int_{t=0}^\tau \frac{dz(t)}{dt} dt - \int_{t=0}^\tau a_1(h(t)) dt\\
&\leq z(h(0)) - z(h(\tau)) \leq z(h(0)),
\end{align*}
as $\pi \leq_C h(t)$ and  $z(h(\tau)) \geq 0$ for $\tau \geq 0$.
Hence, $\int_{t=0}^\tau (h_{1,1}(t) -\pi_{1,1}) dt$ is uniformly bounded in $\tau \geq 0$, meaning $\int_{t=0}^\infty (h_{1,1}(t) -\pi_{1,1}) dt < \infty$.
 
We now show that $h_{1,j}(t)$ converges to $\pi_{1,j}$, for $j=2,\ldots,n$, 
by arguing that $\int_{t=0}^\tau (h_{1,j}(t) -\pi_{1,j}) dt$ is uniformly bounded in $\tau \geq 0$. As $\nu_{j-1}-\nu_j > 0$ it suffices
to show that
\[\int_{t=0}^\tau \sum_{j=2}^n (h_{1,j}(t) -\pi_{1,j}) (\nu_{j-1}-\nu_j) dt\]
is uniformly bounded in $\tau \geq 0$. 
As $z_2(h(\tau)) \geq 0$, we have $z_2(h(0)) \geq -\int_{t=0}^\tau \frac{dz_2(h(t))}{dt} dt $ and 
Lemma \ref{lem:zL1z2}
implies
\begin{align*}
z_2(h(0)) &\geq  \int_{t=0}^\tau h_{1,1}(t) dt - \int_{t=0}^\tau \sum_{j=1}^n (h_{1,j}(t) -h_{1,j+1}(t)) \nu_j dt \\
&= \int_{t=0}^\tau h_{1,1}(t) (1-\nu_1) dt + \int_{t=0}^\tau \sum_{j=2}^n h_{1,j}(t) (\nu_{j-1}-\nu_j) dt \\
&= \int_{t=0}^\tau (h_{1,1}(t)-\pi_{1,1}) (1-\nu_1) dt \\
& \ \ \ \ + \int_{t=0}^\tau \sum_{j=2}^n (h_{1,j}(t)-\pi_{1,j}) (\nu_{j-1}-\nu_{j}) dt, 
\end{align*}
where the last inequality is due to the fact that $\pi$ is a fixed point. This shows the uniform boundedness 
in $\tau$ as $0 \leq \int_{t=0}^\tau (h_{1,1}(t)-\pi_{1,1}) dt \leq z(h(0))$.

We complete the proof by showing that $(h_{L,j}(t)-h_{L,j+1}(t))$ converges to $(\pi_{L,j}-\pi_{L,j+1})$, for $L > 1$ and $j=1,\ldots,n$.
Note that $(h_{L,j}(t)-h_{L,j+1}(t))$ is not necessarily larger than $(\pi_{L,j}-\pi_{L,j+1})$ when $\pi \leq_C h(t)$.
We do however have that $h_{L,j}(t) + h_{L-1,j+1}(t) - h_{L,j+1}(t) \geq  \pi_{L,j} + \pi_{L-1,j+1} - \pi_{L,j+1}$
when $\pi \leq_C h(t)$ due to \eqref{eq:order2}. Thus, using induction on $L$ it suffices to show that
\begin{align}\label{eq:toshowL}
\Psi_{L,\tau} = \int_{t=0}^\tau  \sum_{j=1}^n ((h_{L,j}(t) &+ h_{L-1,j+1}(t) - h_{L,j+1}(t)) \nonumber \\
& -(\pi_{L,j} + \pi_{L-1,j+1} - \pi_{L,j+1}))\nu_j dt
\end{align}
is uniformly bounded in $\tau \geq 0$. 
As $z_{1,L}(h(\tau)) \geq 0$ for $\tau \geq 0$, we have $z_{1,L}(h(0)) \geq -\int_{t=0}^\tau \frac{dz_{1,L}(h(t))}{dt} dt$ and Lemma \ref{lem:zL1z2}
for $L > 1$ implies
\begin{align*}
z_{L,1}&(h(0)) \geq \\
&-\int_{t=0}^\tau \sum_{\ell \geq L} f_{\ell,1}(h(t))dt  
+ \int_{t=0}^\tau \sum_{j=1}^n (h_{L,j}(t) -h_{L,j+1}(t)) \nu_j dt \\
&= -\int_{t=0}^\tau \sum_{\ell \geq L} (f_{\ell,1}(h(t))-f_{\ell,1}(\pi)) dt\\
& \ \ \ \ + \int_{t=0}^\tau \sum_{j=1}^n ((h_{L,j}(t) -h_{L,j+1}(t))-(\pi_{L,j}-\pi_{L,j+1})) \nu_j dt 
\end{align*}
where the last equality holds as $\pi$ is a fixed point. Therefore we find
\begin{align*}
z_{L,1}&(h(0)) + \int_{t=0}^\tau \sum_{j=1}^n (h_{L-1,j+1}(t)-\pi_{L-1,j+1}) \nu_j dt  \\
& \geq -\int_{t=0}^\tau \sum_{\ell \geq L} (f_{\ell,1}(h(t))-f_{\ell,1}(\pi)) dt + \Psi_{L,\tau} 
\end{align*}
By relying on Assumption \ref{ass:sumfL1} with $L > 1$ this can be restated as
\begin{align*}
z_{L,1}(h(0)) &+  \int_{t=0}^\tau \sum_{j=1}^n (h_{L-1,j+1}(t)-\pi_{L-1,j+1}) \nu_j dt \\
&+ \int_{t=0}^\tau \sum_{\ell =1}^{L-1} \sum_{j=1}^n  b_{L,\ell,j}(h(t)) (h_{\ell,j}(t)-\pi_{\ell,j}) dt \\ \geq 
& \Psi_{L,\tau}  +  \int_{t=0}^\tau a_{L,\pi}(h(t)) dt.
\end{align*}
As $b_{L,\ell,j}(h)$ is bounded on $\Omega$, the left hand side is uniformly bounded in $\tau$ by induction on $L$
and therefore so are the (positive) integrals on the right hand side. 
\end{proof}

Theorem \ref{th:global} follows by combining Proposition \ref{prop:sufficient} and \ref{prop:attract}.

\section{Conclusions}\label{sec:concl}

In this paper we demonstrated that monotonicity arguments can still be applied to prove global
attraction of mean field models with hyperexponential job sizes, which is a widely used class of distributions
for systems exhibiting large job size variability. The key ideas to enable the use of such monotonicity
arguments existed in formulating the ODE-based mean field model using a Coxian representation 
and introducing a partial order that is stronger than the usual componentwise order.

We believe that the approach presented in this paper can be extended to heterogeneous systems
and systems in which a server can serve multiple jobs simultaneously (i.e., in which each server
behaves as an $\cdot/Cox/C$ server). Whether the assumption
on the first-come-first-served scheduling discipline can be relaxed is unclear at this
moment and is the topic of future work.

\bibliographystyle{plain}
\bibliography{../PhD/thesis}

\appendix

\section{Proof of Lemma 2}\label{app:lemRi}

By definition of $R_i$ we have
\[R_i = \frac{1}{\mu_i} + \sum_{j=i}^{n-1} \left( \prod_{k=i}^j p_k \right) \frac{1}{\mu_{j+1}},\]
which implies that
\[R_i - R_{i-1} = \frac{1-p_{i-1}}{\mu_i} + \sum_{j=i}^{n-1} \left( \prod_{k=i}^j p_k \right) \frac{1-p_{i-1}}{\mu_{j+1}} - \frac{1}{\mu_{i-1}}.\]
As $\mu_i(1-p_i)$ is decreasing in $i$, we have $(1-p_{i-1})/\mu_{j+1} > (1-p_{j+1})/\mu_{i-1}$ for $j \geq i-1$.
Hence,
\begin{align*}
R_i - R_{i-1} &> \frac{1-p_{i}}{\mu_{i-1}} + \sum_{j=i}^{n-1} \left( \prod_{k=i}^j p_k \right) \frac{1-p_{j+1}}{\mu_{i-1}} - \frac{1}{\mu_{i-1}}\\
&= \frac{1}{\mu_{i-1}} \left( \sum_{j=i}^{n-1} \left( \prod_{k=i}^j p_k \right)  - \sum_{j=i-1}^{n-1} \left( \prod_{k=i}^{j+1} p_k \right)\right)\\
&= -\frac{1}{\mu_{i-1}}\prod_{k=i}^n p_k = 0,
\end{align*}
as $p_n = 0$.

\section{Proof of Proposition 5}\label{app:lemanypi}

Let $\beta$ be the unique invariant vector of $S+(-Se)\alpha$, that is, $\beta (S+(-Se)\alpha)=0$ and
$\beta e = 1$. It is easy to verify that $\beta_i = (\prod_{j=1}^{i-1} p_j)/\mu_i$ (as
the mean service time $\sum_i \beta_i \mu_i$ equals one). 

Due to 
\eqref{eq:newODE2} with $i=2$, we immediately have that
\begin{align}\label{eq:pi11} 
\sum_{j=1}^n (\pi_{1,j}-\pi_{1,j+1})\nu_j = (\pi_{1,1}-\pi_{1,2}) \mu_1.
\end{align}
By means of \eqref{eq:newODE2} we have for $i \geq 2$:
\[\frac{d}{dt} (h_{1,i}(t)-h_{1,i+1}(t)) = (h_{1,i-1}(t)-h_{1,i}(t)) p_{i-1}\mu_{i-1} - (h_{1,i}(t)-h_{1,i+1}(t)) \mu_i, \]
meaning 
\begin{align}\label{eq:pi1i}
(\pi_{1,i-1}-\pi_{1,i}) p_{i-1}\mu_{i-1} = (\pi_{1,i}-\pi_{1,i+1}) \mu_i.
\end{align} 
Combining \eqref{eq:pi11}  and \eqref{eq:pi1i}, we see that the vector
\[(\pi_{1,1}-\pi_{1,2}, \pi_{1,2}-\pi_{1,3}, \ldots, \pi_{1,n-1}-\pi_{1,n}, \pi_{1,n}),\] 
is an invariant vector of $S+(-Se)\alpha$ and thus a multiple of the vector $\beta$.
The result then follows as $\beta e = 1$.

\end{document}